\documentclass[aps,pra,twocolumn,superscriptaddress,floatfix,
nofootinbib,showpacs,longbibliography]{revtex4-2}

\usepackage[normalem]{ulem}
\usepackage{braket}
\usepackage{xcolor}
\usepackage[utf8]{inputenc}  
\usepackage[T1]{fontenc}     
\usepackage[british]{babel}  
\usepackage[sc,osf]{mathpazo}\linespread{1.05}  
\usepackage[scaled=0.86]{berasans}  
\usepackage[colorlinks=true, citecolor=blue, urlcolor=blue]{hyperref}  
\usepackage{graphicx} 
\usepackage[babel]{microtype}  
\usepackage{amsmath,amssymb,amsthm,bm,amsfonts,mathrsfs,bbm} 

\usepackage{xspace}  
\usepackage{pgf,tikz,pgfplots}
\usetikzlibrary{calc}
\pgfplotsset{compat=1.15}
\usepackage{mathrsfs}
\usetikzlibrary{arrows,snakes}
\pagestyle{empty}
\usetikzlibrary{positioning}
\usepackage{xcolor}
\usepackage{appendix}
\usepackage{multirow}
\usepackage{array}
\usepackage{bigstrut}
\usepackage{braket}
\usepackage{color}
\usepackage{natbib}
\usepackage{multirow}
\usepackage{float}
\usepackage[caption = false]{subfig}
\usepackage{xcolor,colortbl}
\usepackage{color}

\newtheorem{theorem}{Theorem}

\newtheorem{proposition}{Proposition}

\newcommand{\tri}[3]{\ket{#1}\ket{#2}\ket{#3}}
\begin{document}

\title{Activating Strongest Possible Nonlocality from Local Sets: An Elimination Paradigm}
	
\author{Subhendu B. Ghosh}
\affiliation{Physics and Applied Mathematics Unit, Indian Statistical Institute, Kolkata, 203 B. T. Road, Kolkata 700108, India}

\author{Tathagata Gupta}
\affiliation{Physics and Applied Mathematics Unit, Indian Statistical Institute, Kolkata, 203 B. T. Road, Kolkata 700108, India}

\author{Ardra A. V.}
\affiliation{School of Physics, IISER Thiruvananthapuram, Vithura, Kerala 695551, India}

\author{Anandamay Das Bhowmik}
\affiliation{Physics and Applied Mathematics Unit, Indian Statistical Institute, Kolkata, 203 B. T. Road, Kolkata 700108, India}

\author{Sutapa Saha}
\affiliation{Physics and Applied Mathematics Unit, Indian Statistical Institute, Kolkata, 203 B. T. Road, Kolkata 700108, India}

\author{Tamal Guha}
\affiliation{Department of Computer Science, The University of Hong Kong, Pokfulam Road, Hong Kong}

\author{Amit Mukherjee}
\affiliation{S. N. Bose National Center for Basic Sciences, Block JD, Sector III, Saltlake, Kolkata 700098, India}
\begin{abstract}
Apart from the Bell nonlocality, which deals with the correlations generated from the local input-output statistics, quantum theory exhibits another kind of nonlocality that involves the indistiguishability of the locally preparable set of multipartite states. While Bell-type nonlocality cannot be distilled from a given local correlation, it is already reported that the latter kind of nonlocality can be activated from a "local", i.e., locally distinguishable set of states. Although, recently it is shown that a stronger notion of such a nonlocality, which deals with elimination instead of discrimination, can be activated from locally preparable bipartite states of dimension $7\times8$, the present work observes that the same notion can be demonstrated even in lower dimensional multipartite systems. Importantly, the strongest possible version of such an activation is further depicted here, where none of the transformed product states can be eliminated, even if all but one of the parties come together.
\end{abstract}
\maketitle
\section{Introduction}
Nonlocality is the most curious feature of the quantum world. A celebrated nonlocal phenomenon - Bell nonlocality \cite{Bell66} arises in quantum physics due to entanglement. This phenomenon actually does not allow any local realistic explanation. Apart from its fundamental implications, Bell nonlocality is also immensely useful in some intriguing applications \cite{testdim,random,btsqkd,nsqkd,renner12,game2,game1}. However, besides Bell nonlocality, quantum theory admits a different kind of multipartite nonclassicality, which identifies a set of orthogonal states which are impossible to distinguish under local operations and classical communications (LOCC) 
\cite{lid1,lid2,lid3,lid4,lid5,lid6,lid7,lid8,lid9,lid10,lid11,lid12,lid13,lid14,lid15,lid16,lid17,lid18,lid19,lid20,lid21,lid22,lid23,lid24,lid25,lid26,lid28,lid29,lid30,lid31,lid32,lid33,lid34,lid35,lid36,lid37,lid38,lid39}. 
Consequently, a stronger version of state distinguishability has recently been introduced in \cite{halderprl}, where the task is to eliminate some elements from a set of orthogonal states (preserving their orthogonality), shared between the distant parties, under LOCC and such an impossibility is termed as \textit{local irreducibility}. For example, four two-qubit Bell states or, complete three-qubit GHZ basis are locally irreducible \cite{halderprl}. Note that, any locally irreducible set of quantum states are surely locally indistinguishable, while the converse is not true.

Hitherto, we have discussed nonlocal features that are predominantly manifested only by entangled states. Surprisingly, entanglement is not always necessary to exhibit those nonlocal characteristics. Bennett \textit{et al.}, in their pioneering work \cite{bennps}, showed that there exist sets of orthogonal product states which are not perfectly locally distinguishable. This intriguing non-classical feature is termed as `quantum nonlocality without entanglement'. More interestingly, it has also been shown that the set of locally indistinguishable product basis in $\mathbb{C}^3\otimes \mathbb{C}^3$ \cite{bennps} is also locally irreducible \cite{halderprl}.  

However, for multipartite scenarios, manifestation of nonlocality gets more complex. For example, in case of locally irreducible three qubit GHZ basis, it can be shown that whenever any two of the players come together (that is, allowed to perform joint or global measurement) they are able to eliminate some states from the set \cite{halderprl}. The same question of collaboration between all but one party can also be considered in case of multipartite quantum nonlocality without entanglement phenomenon. It can be shown that there is a set of orthogonal tripartite product states in $\mathbb{C}^2\otimes \mathbb{C}^2\otimes \mathbb{C}^2$ \cite{bennps} which are locally indistinguishable when all the players are spatially separated. But if any two of the players are allowed to come together to perform some joint measurement on their composite subsystems, the set will turn out to be a locally distinguishable one. However, \cite{halderprl}, Halder \textit{et al.} have demonstrated strongest form of quantum nonlocality without entanglement in multipartite scenario by introducing a set of tripartite product states that are locally irreducible even when all but one players assemble together. Naturally, this set of states is locally indistinguishable in bipartition. This phenomena is termed as `strong quantum nonlocality without entanglement'. Moreover, a characterization of sets of multipartite nonlocal product states has also been introduced recently \cite{gnps}. These recent works of quantum nonlocality without entanglement in multipartite systems have been fueling a plethora of interesting studies \cite{sristypra,halderpra,senguptapra,somprr,halderpra2,minimal,wupra,shipra,dongpra,zhangpra,yuanpra}.

Apart from revealing the elegant intricacies of state space structure, local indistinguishability and irreducibility of quantum states also indicate the prospect of locking of information such that unlocking requires entangled resources. This characteristic certainly has a crucial significance in various quantum cryptographic schemes \cite{terprl,DiViieee,wernerprl, wehner,markhampra}. It is, therefore, an intriguing question to transform a locally distinguishable set of multipartite states to a set with such nonlocal features via local operations and classical communications. 

In a recent work, Bandyopadhyay and Halder \cite{sar} showed that there exist some locally distinguishable sets of multipartite orthogonal states which can be deterministically transformed to locally indistinguishable sets via orthogonality preserving measurements (OPM) on the local sites of the players. They called this phenomenon genuine activation of nonlocality. The term `genuine' denotes that the set must be free from \textit{local redundancy}.\footnote{{Presence of local redundancy in a set of orthogonal states actually signifies that discarding of one or more subsystems from initial system keeps the orthogonality of the set of reduced states intact. We are, thus, interested in such sets which will be nonorthogonal after discarding one or more subsystems \cite{sar}.}} Apparently, it may seem that the phenomenon of genuine activation is somewhat parallel to the distillation of Bell nonlocality\cite{distprl,distashu}. However, there is a fundamental difference. In case of genuine activation of nonlocality, the initial set does not need to show any nonlocal feature whereas for distillation of Bell-type nonlocality the initial correlation must not be local-realistic. Indeed, in a recent result, Li and Zheng have further extended the search by exploring genuine activation of local irreducibility for bipartite product states \cite{liz} and come up with a locally distinguishable set of bipartite product states which via local OPM can be deterministically converted to a locally irreducible set of bipartite product states. However, the question of genuine activation of nonlocality without entanglement in case of multipartite scenario has yet not been explored in its entirety. In this article, we show that a stronger form of nonlocality of a locally distinguishable set of tripartite orthogonal product states can be genuinely activated. More precisely, the set can be deterministically transformed via local OPMs to a set of orthogonal product states which is locally irreducible even when any two of the three parties come together. Further, we come up with a stronger example  using a set of locally distinguishable states in $\mathbb{C}^{3}\otimes\mathbb{C}^{3}\otimes\mathbb{C}^{6}$, where the measurement performed on the third party's possession can activate the strongest possible  genuine nonlocality without entanglement.

In the following sections, we discuss genuine activation of multipartite quantum nonlocality without entanglement case by case. 

\section{Genuine Activation of Multipartite Quantum Nonlocality Without Entanglement}
We will start our exploration of genuinely activable sets by showing that a locally distinguishable set of bipartite product states can be transformed to a set of locally indistinguishable set of orthogonal states via local orthogonality preserving measurements. 

Consider the set $\mathcal{G}_1\equiv\{\ket{\psi_i}_{AB}\}_{i=1}^5 (\subset \mathbb{C}^3\otimes \mathbb{C}^6)$, where, 
\begin{subequations}
\begin{eqnarray}
\ket{\psi_1}_{AB}&=&\ket{0}_{A}\ket{\mathbf{0}-\mathbf{1}+\mathbf{4}-\mathbf{5}}_{B}\\
\ket{\psi_2}_{AB}&=&\ket{2}_{A}\ket{\mathbf{1}-\mathbf{2}+\mathbf{5}-\mathbf{3}}_{B}\\
\ket{\psi_3}_{AB}&=&\ket{1-2}_{A}\ket{\mathbf{0}-\mathbf{4}}_{B}\\
\ket{\psi_4}_{AB}&=&\ket{0-1}_{A}\ket{\mathbf{2}-\mathbf{3}}_{B}\\
\ket{\psi_5}_{AB}&=&\ket{0+1+2}_{A}\ket{\mathbf{0}+\mathbf{1}+\mathbf{2}+\mathbf{3}+\mathbf{4}+\mathbf{5}}_{B}.
\end{eqnarray}
\end{subequations}
\begin{proposition}
The set $\mathcal{G}_{1}$ is locally distinguishable and free from local redundancy.
\end{proposition}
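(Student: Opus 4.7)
The plan is to prove the two claims separately. For local distinguishability I will exhibit an explicit LOCC protocol; for freedom from local redundancy I only need to produce a single non-orthogonal pair of reduced states on each side.

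For the protocol, I would let Alice perform the computational-basis projective measurement $\{\ket{0}\bra{0}, \ket{1}\bra{1}, \ket{2}\bra{2}\}$ on her qutrit, because each of her local vectors $\ket{0}, \ket{2}, \ket{1-2}, \ket{0-1}, \ket{0+1+2}$ has a simple expansion in that basis. A branch-by-branch inspection shows that exactly three of the five states survive each outcome: outcome $\ket{0}$ leaves $\{\psi_1, \psi_4, \psi_5\}$, outcome $\ket{1}$ leaves $\{\psi_3, \psi_4, \psi_5\}$, and outcome $\ket{2}$ leaves $\{\psi_2, \psi_3, \psi_5\}$. I would then verify, with a small number of Bob-side inner products, that in each branch the surviving triple is pairwise orthogonal; Bob can therefore finish the identification with a projective measurement onto the three-dimensional subspace they span (any further outcome never occurs). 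This simultaneously shows that Alice's measurement is globally orthogonality-preserving and that the whole set is LOCC-distinguishable.

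For freedom from local redundancy, it suffices to display one non-orthogonal pair of Alice-side reduced states and one on Bob's side. On Alice's side, $\braket{0|0-1}=1\neq 0$ kills the orthogonality of $\psi_1$ and $\psi_4$ after tracing out Bob. On Bob's side, $\braket{0-1+4-5|1-2+5-3}=-2\neq 0$ kills the orthogonality of $\psi_1$ and $\psi_2$ after tracing out Alice, since their original orthogonality came entirely from $\braket{0|2}=0$ on Alice's side. Consequently neither subsystem can be discarded while preserving the orthogonality of the set.

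The main obstacle is purely clerical: certifying that Alice's measurement is orthogonality-preserving amounts to verifying nine Bob-side inner products of telescoping vectors such as $\braket{0-1+4-5|0+1+2+3+4+5}=1-1+1-1=0$, and one must track signs carefully in the middle branches (where $\psi_3$ and $\psi_4$ acquire factors like $\braket{1|1-2}=-1$ and $\braket{1|0-1}=-1$). No non-routine idea is required; once this bookkeeping is in place, both assertions of the proposition follow at once.
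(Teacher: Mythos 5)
Your local-distinguishability argument is correct but takes a genuinely different route from the paper's: you let Alice measure first in her computational basis $\{\ket{0},\ket{1},\ket{2}\}$ and then verify that the three survivors in each branch carry pairwise orthogonal Bob-side vectors, whereas the paper has Bob measure first with the projectors onto $\ket{\mathbf{0}-\mathbf{4}}$, $\ket{\mathbf{2}-\mathbf{3}}$, $\ket{\mathbf{0}+\mathbf{1}+\mathbf{2}+\mathbf{3}+\mathbf{4}+\mathbf{5}}$ and their complement, which isolates $\psi_3,\psi_4,\psi_5$ in one shot and leaves only the orthogonal pair $\psi_1,\psi_2$ for Alice (Walgate et al.). Both protocols are valid; yours costs a few more inner-product checks but uses only rank-one measurements on Alice's side first, the paper's resolves three of the five states immediately.

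The redundancy half, however, has a genuine gap. ``Free from local redundancy'' here (following Bandyopadhyay--Halder) does not mean that neither party's \emph{entire} subsystem can be discarded; it means that no tensor factor of any local Hilbert space can be discarded while preserving orthogonality. Bob's $\mathbb{C}^6$ must be read as $\mathbb{C}^2\otimes\mathbb{C}^3$ with $\ket{\mathbf{0}}=\ket{00}_{b_1b_2},\dots,\ket{\mathbf{5}}=\ket{12}_{b_1b_2}$, and the cases to be excluded are tracing out $b_1$ alone and tracing out $b_2$ alone --- precisely the sub-parts on which the activating OPM $K_B$ acts, which is why this check is what makes the activation ``genuine.'' Your two checks (tracing out all of $B$, tracing out all of $A$) do not imply these: a partial trace can only destroy orthogonality, never restore it, so non-orthogonality after discarding all of $B=\{b_1,b_2\}$ says nothing about what happens after discarding only $b_1$ or only $b_2$; the implication runs the other way, from the finer checks to yours. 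The paper closes exactly this case by taking the pair $\psi_3,\psi_4$, whose Alice parts $\ket{1-2},\ket{0-1}$ are already non-orthogonal and whose Bob parts $\ket{00}-\ket{11}$ and $\ket{02}-\ket{10}$ become non-orthogonal (indeed both reduce to operators with overlapping support) upon tracing out either $b_1$ or $b_2$. You need to add such a computation for each factor of Bob's $\mathbb{C}^6$; your pair $\psi_1,\psi_4$ would also serve, but the check must actually be carried out.
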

\begin{proof}
It is quite straightforward to prove that the set $\mathcal{G}_1$ is without local redundancy. Here, Bob's system can be considered to be the composition of qubit and qutrit subsystems. Precisely, $\ket{\mathbf{0}}_B:=\ket{00}_{b_1b_2},\ket{\mathbf{1}}_B:=\ket{01}_{b_1b_2},\ket{\mathbf{2}}_B:=\ket{02}_{b_1b_2},\ket{\mathbf{3}}_B:=\ket{10}_{b_1b_2},\ket{\mathbf{4}_B}:=\ket{11}_{b_1b_2},\ket{\mathbf{5}}_B:=\ket{12}_{b_1b_2}$. Take two states, $\ket{\psi_3}_{AB}$ and $\ket{\psi_4}_{AB}$. When any of the sub-parts (qubit or qutrit) of Bob's system for both states is discarded the reduced states will be non-orthogonal.

Furthermore, $\mathcal{G}_1$ can also be shown to be locally distinguishable.
The players pursue the  following distinguishability protocol. First Bob performs a measurement $N_B\equiv\{N_1:=P[\ket{\mathbf{0}-\mathbf{4}}_B],N_2:=P[\ket{\mathbf{2}-\mathbf{3}}_B],N_3:=P[\ket{\mathbf{0}+\mathbf{1}+\mathbf{2}+\mathbf{3}+\mathbf{4}+\mathbf{5}}_B],N_4:=\mathbb{I}-(N_1+N_2+N_3)\}$. Here, $P[(\ket{i},\ket{j})_\#]:=(\ket{i}\bra{i}+\ket{j}\bra{j})_\#$, and $\#$ denotes the party. When $N_1$ clicks, the given state must be $\ket{\psi_3}$. Similarly, for the click $N_2$, the state is $\ket{\psi_4}$, and for $N_3$ it is $\ket{\psi_5}$. Whenever $N_4$ clicks the given state can be either $\ket{\psi_1}$ or $\ket{\psi_2}$. However, in that case, Alice can perform a measurement to distinguish between these two \cite{lid3,walgate}. This concludes the local distinguishability protocol for the set $\mathcal{G}_1$.
\end{proof}


In the following, we will demonstrate a protocol to activate the nonlocality without entanglement from the set $\mathcal{G}_{1}$. 
\begin{theorem}
The locally distinguishable set $\mathcal{G}_{1}$ can be transformed deterministically to a locally indistinguishable set via local OPM.
\end{theorem}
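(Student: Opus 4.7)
The plan is to exhibit a single local OPM on Bob that leaves a locally indistinguishable orthogonal set in every measurement branch. Using the factorization $\mathbb{C}^6\cong\mathbb{C}^2_{b_1}\otimes\mathbb{C}^3_{b_2}$ already introduced in the proof of Proposition~1, I would have Bob perform the projective measurement $\{P_0,P_1\}$ with $P_k=\ket{k}\bra{k}_{b_1}\otimes I_{b_2}$ on his qubit subsystem. The transformation would then be deterministic in the sense that, regardless of which outcome occurs, the renormalised post-measurement set is both orthogonal and locally indistinguishable.

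First I would rewrite each $\ket{\psi_i}_{AB}$ in the $A\otimes b_1\otimes b_2$ form---for example, $\ket{\psi_1}=\ket{0}_A\otimes[\ket{0}_{b_1}(\ket{0}-\ket{1})_{b_2}+\ket{1}_{b_1}(\ket{1}-\ket{2})_{b_2}]$, while $\ket{\psi_5}$ factorises across all three subsystems---and then compute the un-normalised post-measurement states for each outcome. A direct check of the ten pairwise inner products (in each pair at least one of the $A$ or $b_2$ factors ends up orthogonal) shows that $\{P_0,P_1\}$ is orthogonality-preserving. After dropping the fixed $b_1$ factor, outcome $P_0$ leaves the five states $\ket{0}(\ket{0}-\ket{1})$, $\ket{2}(\ket{1}-\ket{2})$, $(\ket{1}-\ket{2})\ket{0}$, $(\ket{0}-\ket{1})\ket{2}$, $(\ket{0}+\ket{1}+\ket{2})(\ket{0}+\ket{1}+\ket{2})$ in $\mathbb{C}^3_A\otimes\mathbb{C}^3_{b_2}$---i.e.\ the four ``minus'' dominoes from the Bennett \emph{et al.} $3\otimes3$ basis augmented by the fully symmetric product; outcome $P_1$ is structurally analogous.

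I would then prove local indistinguishability of this 5-state set by a Walgate--Hardy-type argument. Suppose Alice initiates an LOCC protocol with an OPM $\{M_r\}$ and set $T_r=M_r^\dagger M_r$. For every pair $(i,j)$ whose $b_2$-parts are non-orthogonal, orthogonality of $(M_r\otimes I)\ket{\psi_i}$ and $(M_r\otimes I)\ket{\psi_j}$ would enforce $\langle\alpha_i|T_r|\alpha_j\rangle=0$. Systematically collecting the resulting constraints would force the off-diagonal entries of $T_r$ to vanish, and crucially the two constraints contributed by the uniform state $\ket{\psi_5}$ would equate the three diagonal entries, giving $T_r\propto I$. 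Rerunning the same analysis with Bob starting on $b_2$ yields the same conclusion, so neither party can begin a non-trivial LOCC protocol and the set is locally indistinguishable.

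The main obstacle is the linear-algebra bookkeeping in the last step: verifying that the orthogonality-preserving constraints really do close the loop to equate all three diagonal entries---the presence of $\ket{\psi_5}$ is precisely what makes this work, since the four dominoes alone would admit a two-parameter family of non-trivial diagonal OPMs. One must also rerun the entire Walgate--Hardy reduction for the $P_1$ branch and verify that the implicit permutation of $b_2$-basis vectors between the two outcomes does not leave any slack in the constraint system, so that the deterministic character of the activation is fully established.
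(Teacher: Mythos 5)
Your construction is the paper's construction in different clothing: the projector $P_k=\ket{k}\bra{k}_{b_1}\otimes I_{b_2}$ is exactly the paper's $K_{k+1}=P[(\ket{\mathbf{3k}},\ket{\mathbf{3k+1}},\ket{\mathbf{3k+2}})_B]$ under the identification $\ket{\mathbf{3k+j}}_B=\ket{k}_{b_1}\ket{j}_{b_2}$, and both branches land on the same five-state Tiles UPB in $\mathbb{C}^3\otimes\mathbb{C}^3$ (the four ``minus'' dominoes plus the uniform stopper), so the OPM and the post-measurement identification coincide with the paper's. Where you genuinely diverge is the final step: the paper simply cites the known local indistinguishability of this UPB, whereas you re-derive it via a Walgate--Hardy/trivial-OPM argument. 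That argument does go through---writing $T_r=M_r^\dagger M_r=(t_{ij})$ on Alice's side, the pairs whose Bob parts are non-orthogonal force $t_{02}=0$, $t_{01}=t_{02}$, $t_{20}=t_{21}$, and the two constraints involving the stopper give $t_{11}=t_{22}$ and $t_{00}=t_{11}$, so $T_r\propto I$; the Bob-first analysis is symmetric, and your observation that the four dominoes alone leave a free diagonal family is accurate. This buys you a self-contained proof and in fact establishes the strictly stronger property of local irreducibility of the transformed set, at the cost of the bookkeeping you acknowledge; the paper's citation route is shorter and suffices for the theorem as stated. Your handling of the $P_1$ branch is also fine, since it is the same UPB up to a cyclic relabelling of the $b_2$ basis, which does not alter the constraint system.
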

\begin{proof}
Consider that Bob performs a local OPM on the subsystem $B$: $K_B\equiv\{K_1:=P[(\ket{\mathbf{0}},\ket{\mathbf{1}},\ket{\mathbf{2}})_B], K_2:=P[(\ket{\mathbf{3}},\ket{\mathbf{4}},\ket{\mathbf{5}})_B]\}$. If $K_1$ clicks they end up in one of
\begin{subequations}
\begin{align*}
\left\{\!\begin{aligned}
\ket{0}_{A}\ket{\mathbf{0}-\mathbf{1}}_{B},~\ket{2}_{A}\ket{\mathbf{1}-\mathbf{2}}_{B},\\\ket{1-2}_{A}\ket{\mathbf{0}}_{B},~
\ket{0-1}_{A}\ket{\mathbf{2}}_{B},\\~\ket{0+1+2}_{A}\ket{\mathbf{0}+\mathbf{1}+\mathbf{2}}_{B}.~~~
\end{aligned}\right\}	
\end{align*}
\end{subequations}
On the other hand, if Bob gets $K_2$, they are then left with one of the following five states: 
\begin{subequations}
\begin{align*}
\left\{\!\begin{aligned}
\ket{0}_{A}\ket{\mathbf{4}-\mathbf{5}}_{B},~\ket{2}_{A}\ket{\mathbf{5}-\mathbf{3}}_{B},\\\ket{1-2}_{A}\ket{\mathbf{4}}_{B},~
\ket{0-1}_{A}\ket{\mathbf{3}}_{B},\\~\ket{0+1+2}_{A}\ket{\mathbf{3}+\mathbf{4}+\mathbf{5}}_{B}.~~~
\end{aligned}\right\}	
\end{align*}
\end{subequations}
It is clear that the five updated states when $K_1$ clicks form the celebrated unextendible product basis (UPB) \cite{upb,cmp} in $\mathbb{C}^3 \otimes \mathbb{C}^3$. The states in case of $K_2$ outcome also form the same UPB where $\{\ket{\mathbf{3}}_B,\ket{\mathbf{4}}_B,\ket{\mathbf{5}}_B\}$ span Bob's three dimensional Hilbert space. It has been well established that this orthogonal set of product states is locally indistinguishable \cite{bennps,cmp}. 
\end{proof}
This is certainly an example of genuine activation of bipartite quantum nonlocality without entanglement. However, orthogonal sets of bipartite product states that show activable nonlocality have already been reported \cite{liz}. But in some of the protocols, mentioned there, different outcome of a single local OPM provides different dimensional sets of nonlocal product states. Moreover, the dimension requirement to activate such nonlocality in \cite{liz} is minimum $7\times8$ for  bipartite systems, while our elegant example shows that such an activation is possible even with lower dimensional quantum systems.

The question of genuine activation of multipartite quantum nonlocality from orthogonal sets of product states also has not been explored yet. In the following, we delve into this question and answer in affirmative with an explicit example. Consider the orthogonal set of tripartite product states $\mathcal{G}_2\equiv\{\ket{\phi_i}_{ABC}\}_{i=1}^4(\subset \mathbb{C}^2\otimes \mathbb{C}^2\otimes\mathbb{C}^4)$ where,
\begin{subequations}
\begin{eqnarray}
\ket{\phi_1}_{ABC}&=&\ket{0}_{A}\ket{0-1}_B\ket{\mathbf{1}+\mathbf{2}}_{C}\\
\ket{\phi_2}_{ABC}&=&\ket{0-1}_{A}\ket{1}_B\ket{\mathbf{0}+\mathbf{3}}_{C}\\
\ket{\phi_3}_{ABC}&=&\ket{1}_{A}\ket{0}_B\ket{\mathbf{0}-\mathbf{1}+\mathbf{2}-\mathbf{3}}_{C}\\
\ket{\phi_4}_{ABC}&=&\ket{0+1}_{A}\ket{0+1}_B\ket{\mathbf{0}+\mathbf{1}+\mathbf{2}+\mathbf{3}}_{C},
\end{eqnarray}
\end{subequations}
\begin{proposition}
The set $\mathcal{G}_{2}$ is free from local redundancy and discriminable under LOCC.
\end{proposition}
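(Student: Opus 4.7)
The plan is to dispatch the two claims independently. For absence of local redundancy, I need to exhibit, for each single-party deletion $X\in\{A,B,C\}$, a pair of states whose reduced states on the remaining subsystems are non-orthogonal. Reading off the components of the $\phi_{i}$, the natural witnesses are $(\phi_{2},\phi_{4})$ when $A$ is traced out, $(\phi_{1},\phi_{4})$ when $B$ is traced out, and $(\phi_{1},\phi_{2})$ when $C$ is traced out. In each case the relevant overlap collapses to a product of two scalar factors that is easily seen to be non-zero (for instance $\langle 1|0+1\rangle_{B}\langle\mathbf{0}+\mathbf{3}|\mathbf{0}+\mathbf{1}+\mathbf{2}+\mathbf{3}\rangle_{C}=2$ for the first pair), so this step is routine.

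For LOCC discrimination, my plan is to place the first nontrivial measurement on Charlie and exploit a structural asymmetry among his components. The key observation is that, while the four $C$-components of $\phi_{1},\dots,\phi_{4}$ are not mutually orthogonal --- $\ket{\mathbf{0}+\mathbf{1}+\mathbf{2}+\mathbf{3}}$ overlaps both $\ket{\mathbf{1}+\mathbf{2}}$ and $\ket{\mathbf{0}+\mathbf{3}}$ --- the three vectors $\ket{\mathbf{1}+\mathbf{2}}$, $\ket{\mathbf{0}+\mathbf{3}}$ and $\ket{\mathbf{0}-\mathbf{1}+\mathbf{2}-\mathbf{3}}$ are pairwise orthogonal. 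I therefore let Charlie project in the orthonormal basis $\{\ket{c_{1}},\ket{c_{2}},\ket{c_{3}},\ket{c_{4}^{\prime}}\}$, where $\ket{c_{i}}$ is the normalised $C$-component of $\ket{\phi_{i}}$ for $i=1,2,3$ and $\ket{c_{4}^{\prime}}:=\tfrac{1}{2}(\ket{\mathbf{0}}+\ket{\mathbf{1}}-\ket{\mathbf{2}}-\ket{\mathbf{3}})$ completes the basis. Four branches arise: outcome $c_{3}$ pinpoints $\phi_{3}$; outcome $c_{4}^{\prime}$ has zero probability for every $\phi_{i}$; and outcomes $c_{1}$, $c_{2}$ leave the candidate pairs $\{\phi_{1},\phi_{4}\}$ and $\{\phi_{2},\phi_{4}\}$ respectively, because $\ket{\mathbf{0}+\mathbf{1}+\mathbf{2}+\mathbf{3}}$ lies in $\mathrm{span}\{\ket{c_{1}},\ket{c_{2}}\}$ and splits equally between those two outcomes.

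In the $c_{1}$-branch the surviving reduced $AB$-states are $\ket{0}_{A}\ket{0-1}_{B}$ and $\ket{0+1}_{A}\ket{0+1}_{B}$; their Bob-factors $\ket{0-1}$ and $\ket{0+1}$ are orthogonal, so Bob's binary projection $\{P[\ket{0-1}_{B}],P[\ket{0+1}_{B}]\}$ finishes the protocol. Symmetrically, in the $c_{2}$-branch the surviving states $\ket{0-1}_{A}\ket{1}_{B}$ and $\ket{0+1}_{A}\ket{0+1}_{B}$ have orthogonal Alice-factors, and Alice concludes with the analogous projective measurement on her qubit. Orthogonality preservation at every stage is automatic because in each surviving pair the decisive scalar $\langle 0-1|0+1\rangle=0$ vanishes on a single qubit, so the subsequent binary measurement cannot spoil orthogonality.

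The main obstacle is locating the right Charlie basis: a naive computational-basis measurement on $C$ fails because $\phi_{3}$ and $\phi_{4}$ continue to overlap on every standard outcome, producing residual ambiguities that no subsequent single-qubit measurement on $A$ or $B$ can resolve. Once the pairwise orthogonality of the three distinguished $C$-vectors is recognised, the protocol collapses to three elementary binary projections, and the remaining verifications are direct inner-product checks.
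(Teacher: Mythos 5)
Your LOCC--discrimination argument is correct and is essentially the paper's protocol in different clothes: the paper has Charlie measure $\{P[\ket{\mathbf{0}+\mathbf{3}}],P[\ket{\mathbf{0}-\mathbf{3}}],P[\ket{\mathbf{1}+\mathbf{2}}],P[\ket{\mathbf{1}-\mathbf{2}}]\}$, which produces the same branch structure as your basis $\{\ket{c_1},\ket{c_2},\ket{c_3},\ket{c_4'}\}$ --- one branch isolating $\ket{\phi_3}$ and two branches leaving the pairs $\{\ket{\phi_1},\ket{\phi_4}\}$ and $\{\ket{\phi_2},\ket{\phi_4}\}$, each resolvable by a single-qubit projection (the paper just cites Walgate \emph{et al.} for that last step, whereas you spell the measurement out). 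That half is fine.

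The redundancy half, however, verifies the wrong property. In this paper (following Bandyopadhyay--Halder), \emph{local redundancy} refers to a tensor-factor decomposition of a single party's Hilbert space: here the only nontrivial one is Charlie's $\mathbb{C}^4=\mathbb{C}^2_{c_1}\otimes\mathbb{C}^2_{c_2}$, and freedom from local redundancy means that tracing out $c_1$ (or $c_2$) alone destroys the pairwise orthogonality of the set. Your check --- deleting an \emph{entire} party $A$, $B$ or $C$ and exhibiting a non-orthogonal pair on the remaining two parties --- neither implies nor is implied by this: a set can be non-orthogonal after discarding all of $C$ yet remain perfectly orthogonal after discarding only $c_2$, in which case Charlie could throw away a qubit and the ``activation'' would not be genuine. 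So as written you have not ruled out the case the definition is actually about. The repair is short and is what the paper does: writing $\ket{\mathbf{0}}=\ket{00}_{c_1c_2}$, $\ket{\mathbf{1}}=\ket{01}_{c_1c_2}$, $\ket{\mathbf{2}}=\ket{10}_{c_1c_2}$, $\ket{\mathbf{3}}=\ket{11}_{c_1c_2}$, the single-qubit marginals of $\ket{\mathbf{1}+\mathbf{2}}$ and $\ket{\mathbf{0}+\mathbf{3}}$ on either $c_j$ are both $\mathbb{I}/2$, hence non-orthogonal, and the $AB$ parts of $\ket{\phi_1}$ and $\ket{\phi_2}$ overlap (the inner product $\braket{0|0-1}\braket{0-1|1}=-1$ you already have), so the reduced states on $ABc_j$ are non-orthogonal for $j=1,2$.
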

\begin{proof}
It is straightforward to show that the set $\mathcal{G}_2$ is free form local redundancy. Here, Charlie's system can be considered as two composite qubits. Let us denote those subsystems by $c_1$ and $c_2$: $\ket{\mathbf{0}}_C:=\ket{00}_{c_1c_2},\ket{\mathbf{1}}_C:=\ket{01}_{c_1c_2},\ket{\mathbf{2}}_C:=\ket{10}_{c_1c_2},\ket{\mathbf{3}}:=\ket{11}_{c_1c_2}$. Consider the states $\ket{\phi_1}$ and $\ket{\phi_2}$. Note that discarding the subsystem $c_i$ we will have $\rho^k_{c_j}:=Tr_{c_i}\ket{\phi_k}\bra{\phi_k}$, for $i,j,k\in\{1,2\}$. It is quite evident that $\rho^1_{c_j}$ and $\rho^2_{c_j}$ are non-orthogonal for $j=1,2$.  

We will now show that the set $\mathcal{G}_2$ is locally distinguishable. The distinguishability protocol is as follows. First, Charlie performs a measurement $K_C\equiv\{K_1:=P[\ket{\mathbf{0}+\mathbf{3}}_C],K_2:=P[\ket{\mathbf{0}-\mathbf{3}}_C],K_3:=P[\ket{\mathbf{1}+\mathbf{2}}_C],K_4:=P[\ket{\mathbf{1}-\mathbf{2}}_C]\}$. If $K_1$ clicks the given state must be one of $\{\ket{\phi_2}_{ABC}, \ket{\phi_4}_{ABC}\}$ which are perfectly locally distinguishable \cite{lid3,walgate}. If $K_2$ clicks, the given state must be $\ket{\phi_2}_{ABC}$. When $K_3$ clicks, the given state is one of $\{\ket{\phi_1}_{ABC}, \ket{\phi_4}_{ABC}\}$ which can always be perfectly distinguished via LOCC. For the click $K_4$, the given state is certainly $\ket{\phi_3}_{ABC}$.  
\end{proof}

Now, we are in a position to show that the set $\mathcal{G}_2$ can be transformed, with certainty, to a set of orthogonal states which are impossible to distinguish locally. 
\begin{theorem}
The set $\mathcal{G}_2$ can be converted to a set of tripartite locally indistinguishable product states, \textit{a.k.a}, the Shifts UPB \cite{upb} using local OPM.  
\end{theorem}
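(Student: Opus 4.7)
The plan is for Charlie, who holds the four-dimensional subsystem, to apply a two-outcome local OPM that projects onto the two orthogonal qubit subspaces aligned with the first factor $c_1$ of his composite $c_1c_2$ decomposition, namely $K_1 := P[(\ket{\mathbf{0}},\ket{\mathbf{1}})_C]$ and $K_2 := P[(\ket{\mathbf{2}},\ket{\mathbf{3}})_C]$. Since every $\ket{\phi_i}_{ABC}$ in $\mathcal{G}_2$ has a Charlie factor that splits evenly across these two subspaces, both outcomes occur with nonzero probability for every input, orthogonality among the post-measurement branches is preserved, and deterministically one ends up with a cardinality-four orthogonal product set living in $\mathbb{C}^2\otimes \mathbb{C}^2\otimes \mathbb{C}^2$.

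The next step is to exhibit the two outcome branches explicitly. When $K_1$ clicks, identifying $\ket{\mathbf{0}}\mapsto \ket{0}_C$ and $\ket{\mathbf{1}}\mapsto \ket{1}_C$, a direct substitution gives the four states $\ket{0}\ket{0-1}\ket{1}$, $\ket{0-1}\ket{1}\ket{0}$, $\ket{1}\ket{0}\ket{0-1}$, and $\ket{0+1}\ket{0+1}\ket{0+1}$, which one recognises as the Shifts UPB of \cite{upb}. When $K_2$ clicks, the analogous identification $\ket{\mathbf{2}}\mapsto \ket{1}_C$, $\ket{\mathbf{3}}\mapsto \ket{0}_C$ (equivalently, a bit flip on the $c_2$ qubit, which is a local unitary on Charlie and therefore admissible as post-processing) converts the four surviving states into precisely the same Shifts UPB, up to irrelevant overall signs.

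With both outcome branches identified, modulo a local unitary on Charlie, with the Shifts UPB, the conclusion follows by invoking the classical result \cite{bennps,upb} that this UPB is locally indistinguishable. The only nonobvious step, and the one I would worry about most, is not any calculation but a choice: one must partition Charlie's computational basis so that \emph{both} outcomes of the OPM produce, after local relabeling, one and the same familiar UPB, since otherwise the activation would be only probabilistic in character. The deliberate symmetric design of $\mathcal{G}_2$, in which every Charlie factor pairs a $c_1=0$ index with a $c_1=1$ index through an equal-weight superposition, is exactly what makes the split $\{K_1,K_2\}$ succeed and promotes the protocol from a mere local indistinguishability activation to the Shifts-UPB statement in the theorem.
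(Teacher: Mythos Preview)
Your proof is correct and follows essentially the same route as the paper: Charlie applies the two-outcome projective OPM $\{P[(\ket{\mathbf{0}},\ket{\mathbf{1}})_C],\,P[(\ket{\mathbf{2}},\ket{\mathbf{3}})_C]\}$, and in each branch the four surviving states are identified (up to a local relabelling on Charlie in the second branch) with the Shifts UPB, whose local indistinguishability is then invoked from the literature. The only difference is cosmetic---you are more explicit about the bit-flip identification in the $K_2$ branch and about why the measurement is orthogonality-preserving, whereas the paper simply lists the two post-measurement sets and declares them equivalent to the Shifts UPB.
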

\begin{proof}
Let us consider that Charlie performs a local OPM on the subsystem $C$: $R_B\equiv\{R_1:=P[(\ket{\mathbf{0}},\ket{\mathbf{1}})_C], R_2:=P[(\ket{\mathbf{2}},\ket{\mathbf{3}})_C]\}$. If $R_1$ clicks they end up in one of 
\begin{eqnarray}\nonumber
\{\ket{0}_{A}\ket{0-1}_B\ket{\mathbf{1}}_C,\ket{0-1}_{A}\ket{1}_B\ket{\mathbf{0}}_C,\\\nonumber\ket{1}_{A}\ket{0}_B\ket{\mathbf{0-1}}_C,\ket{0+1}_{A}\ket{0+1}_B\ket{\mathbf{0+1}}_C\}.
\end{eqnarray}
On the other hand, if $R_2$ clicks, they will be left with one of 
\begin{eqnarray}\nonumber
\{\ket{0}_{A}\ket{0-1}_B\ket{\mathbf{2}}_C,\ket{0-1}_{A}\ket{1}_B\ket{\mathbf{3}}_C,\\\nonumber\ket{1}_{A}\ket{0}_B\ket{\mathbf{2-3}}_C,\ket{0+1}_{A}\ket{0+1}_B\ket{\mathbf{2+3}}_C\}.
\end{eqnarray}
It is evident that both the above sets are basically equivalent to the Shifts UPB in $\mathbb{C}^2\otimes\mathbb{C}^2\otimes\mathbb{C}^2$ \cite{upb,cmp,bennps}. Furthermore, states belonging to Shifts UPB are known to be perfectly indistinguishable  via LOCC \cite{cmp}. This completes our proof.
\end{proof}
Note that, whenever any two parties of the above Shifts UPB come together, the set turns out to be locally distinguishable. Any orthogonal multipartite set of product states that can be thought of as a bipartition of $\mathbf{C}^2\otimes \mathbb{C}^d$ with $d\ge2$, can always be shown as locally distinguishable \cite{cmp}. Therefore, it is quite clear that this set must not be locally irreducible in any bipartition.
\section{Genuine Activation of Strongest Possible Quantum Nonlocality Without Entanglement}
After demonstrating the activation of genuine nonlocality without entanglement in the multipartite scenario, the pertinent question is whether or not there exists any orthogonal set which can show genuine activation of the strongest possible form of quantum nonlocality without entanglement as demonstrated in \cite{halderprl}.  Note that, strong quantum nonlocality without entanglement \cite{halderprl} can not be obtained in mere three-qubit systems. The minimum dimension required to show such phenomena is at least three qutrit. In the following, we provide a set which answers this question in affirmation. Consider the orthogonal set $\mathcal{G}_3$ that contains the following of $27$ tripartite product states\footnote{For the sake of better readability here we drop the party notation ($A,B,C$) in the subscripts} $\ket{\xi_i^{\pm}},~i\in\{1,\cdots,4,6,\cdots,9,11,\cdots,14\}$ and $\ket{\xi_j},~j\in\{5,10,15\}$ in $\mathbb{C}^6\otimes \mathbb{C}^6\otimes\mathbb{C}^6$.
\begin{subequations}\label{str}
\begin{eqnarray}
\ket{\xi_1^{\pm}}&=&\ket{\mathbf{0}-\mathbf{4}}\ket{\mathbf{1}-\mathbf{5}}\ket{\mathbf{0}\pm\mathbf{1}+\mathbf{4}\pm\mathbf{5}}\\
\ket{\xi_2^{\pm}}&=&\ket{\mathbf{0}-\mathbf{4}}\ket{\mathbf{2}-\mathbf{3}}\ket{\mathbf{0}\pm\mathbf{2}+\mathbf{4}\pm\mathbf{3}}\\
\ket{\xi_3^{\pm}}&=&\ket{\mathbf{1}-\mathbf{5}}\ket{\mathbf{2}-\mathbf{3}}\ket{\mathbf{0}\pm\mathbf{1}+\mathbf{4}\pm\mathbf{5}}\\
\ket{\xi_4^{\pm}}&=&\ket{\mathbf{2}-\mathbf{3}}\ket{\mathbf{1}-\mathbf{5}}\ket{\mathbf{0}\pm\mathbf{2}+\mathbf{4}\pm\mathbf{3}}\\
\ket{\xi_5}&=&\ket{\mathbf{0}-\mathbf{4}}\ket{\mathbf{0}-\mathbf{4}}\ket{\mathbf{0}-\mathbf{4}}\\
\ket{\xi_6^{\pm}}&=&\ket{\mathbf{1}-\mathbf{5}}\ket{\mathbf{0}\pm\mathbf{1}+\mathbf{4}\pm\mathbf{5}}\ket{\mathbf{0}-\mathbf{4}}\\
\ket{\xi_7^{\pm}}&=&\ket{\mathbf{2}-\mathbf{3}}\ket{\mathbf{0}\pm\mathbf{2}+\mathbf{4}\pm\mathbf{3}}\ket{\mathbf{0}-\mathbf{4}}\\
\ket{\xi_8^{\pm}}&=&\ket{\mathbf{2}-\mathbf{3}}\ket{\mathbf{0}\pm\mathbf{1}+\mathbf{4}\pm\mathbf{5}}\ket{\mathbf{1}-\mathbf{5}}\\
\ket{\xi_9^{\pm}}&=&\ket{\mathbf{1}-\mathbf{5}}\ket{\mathbf{0}\pm\mathbf{2}+\mathbf{4}\pm\mathbf{3}}\ket{\mathbf{2}-\mathbf{3}}\\
\ket{\xi_{10}}&=&\ket{\mathbf{1}-\mathbf{5}}\ket{\mathbf{1}-\mathbf{5}}\ket{\mathbf{1}-\mathbf{5}}\\
\ket{\xi_{11}^{\pm}}&=&\ket{\mathbf{0}\pm\mathbf{1}+\mathbf{4}\pm\mathbf{5}}\ket{\mathbf{0}-\mathbf{4}}\ket{\mathbf{1}-\mathbf{5}}\\
\ket{\xi_{12}^{\pm}}&=&\ket{\mathbf{0}\pm\mathbf{2}+\mathbf{4}\pm\mathbf{3}}\ket{\mathbf{0}-\mathbf{4}}\ket{\mathbf{2}-\mathbf{3}}\\
\ket{\xi_{13}^{\pm}}&=&\ket{\mathbf{0}\pm\mathbf{1}+\mathbf{4}\pm\mathbf{5}}\ket{\mathbf{1}-\mathbf{5}}\ket{\mathbf{2}-\mathbf{3}}\\
\ket{\xi_{14}^{\pm}}&=&\ket{\mathbf{0}\pm\mathbf{2}+\mathbf{4}\pm\mathbf{3}}\ket{\mathbf{2}-\mathbf{3}}\ket{\mathbf{1}-\mathbf{5}}\\
\ket{\xi_{15}}&=&\ket{\mathbf{2}-\mathbf{3}}\ket{\mathbf{2}-\mathbf{3}}\ket{\mathbf{2}-\mathbf{3}},
\end{eqnarray}
\end{subequations}  
\begin{proposition}\label{proposition2}
The set $\mathcal{G}_{3}$ is not locally redundant and is distinguishable under LOCC, even when all the parties are separated. 
\end{proposition}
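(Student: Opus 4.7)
My plan splits into the two claims of the proposition, handled separately.

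For non-redundancy, I would re-use the tensor-product decomposition $\mathbb{C}^6\cong\mathbb{C}^2\otimes\mathbb{C}^3$ introduced in the proof of $\mathcal{G}_1$, identifying for each party $\ket{\mathbf{0}}=\ket{00}$, $\ket{\mathbf{1}}=\ket{01}$, $\ket{\mathbf{2}}=\ket{02}$, $\ket{\mathbf{3}}=\ket{10}$, $\ket{\mathbf{4}}=\ket{11}$, $\ket{\mathbf{5}}=\ket{12}$. The structural point is that every local factor appearing in $\mathcal{G}_3$ is supported on a pair $\{\ket{i,j},\ket{i',j'}\}$ with $i\neq i'$ and $j\neq j'$, so tracing out either the qubit or the qutrit factor collapses each "minus'' vector to a maximally-mixed-like state on the surviving factor. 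Concretely, I would exhibit pairs such as $\{\xi_{1}^{+},\xi_{3}^{+}\}$ (which share Charlie's factor), whose Alice- and Bob-reductions on the surviving qutrit yield overlapping-support mixed states, making their inner product non-vanishing after the discard. A small case check covering the qubit/qutrit of each party shows that for any proposed discard, at least one such pair appears, ruling out redundancy.

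For LOCC distinguishability I would exploit a clean subspace decomposition of each party's $\mathbb{C}^6$ as $\mathcal{H}_{-}\oplus\mathcal{H}_{+}$, where $\mathcal{H}_{-}=\mathrm{span}\{\ket{\mathbf{0}-\mathbf{4}},\ket{\mathbf{1}-\mathbf{5}},\ket{\mathbf{2}-\mathbf{3}}\}$ and $\mathcal{H}_{+}=\mathrm{span}\{\ket{\mathbf{0}+\mathbf{4}},\ket{\mathbf{1}+\mathbf{5}},\ket{\mathbf{2}+\mathbf{3}}\}$. Reading off \eqref{str} shows that each local factor in each $\xi$ lies entirely in one of these two subspaces; in particular all the four-term vectors $\ket{\mathbf{0}\pm\mathbf{1}+\mathbf{4}\pm\mathbf{5}}$ and $\ket{\mathbf{0}\pm\mathbf{2}+\mathbf{4}\pm\mathbf{3}}$ belong to $\mathcal{H}_{+}$. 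The first stage of the protocol is: every party performs the projective measurement $\{\Pi_{-}=P[\mathcal{H}_{-}],\Pi_{+}=P[\mathcal{H}_{+}]\}$. Since each factor sits inside one subspace, the outcome is deterministic and the global state is undisturbed, so all 27 states are partitioned into four disjoint triples of outcomes: $(-,-,-)$ picks out $\{\xi_{5},\xi_{10},\xi_{15}\}$, while $(-,-,+)$, $(-,+,-)$ and $(+,-,-)$ each pick out eight states (the $\xi_{1\text{--}4}^{\pm}$, $\xi_{6\text{--}9}^{\pm}$ and $\xi_{11\text{--}14}^{\pm}$ families respectively).

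Stage two is the refinement. Each party whose outcome was $\Pi_{-}$ performs the rank-one projective measurement in the orthonormal basis $\{\ket{\mathbf{0}-\mathbf{4}},\ket{\mathbf{1}-\mathbf{5}},\ket{\mathbf{2}-\mathbf{3}}\}$ of $\mathcal{H}_{-}$. For the $(-,-,-)$ branch this already identifies the state. For the other three branches, the outcomes of the two $\Pi_{-}$ parties uniquely pin down the "family'' (which of $\xi_{1},\xi_{2},\xi_{3},\xi_{4}$, etc.) and leave only the $\pm$ sign ambiguous. The single party holding the four-term vector then resolves the $\pm$ by measuring in the orthonormal basis $\{\ket{\mathbf{0}+\mathbf{1}+\mathbf{4}+\mathbf{5}},\ket{\mathbf{0}-\mathbf{1}+\mathbf{4}-\mathbf{5}}\}$ or $\{\ket{\mathbf{0}+\mathbf{2}+\mathbf{4}+\mathbf{3}},\ket{\mathbf{0}-\mathbf{2}+\mathbf{4}-\mathbf{3}}\}$, the choice being fixed by the classical messages from stage two. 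This is an LOCC protocol using only rank-preserving local projective measurements and completes the identification for all 27 states.

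The main hurdle I anticipate is not conceptual but bookkeeping: verifying, case by case, that the family-and-sign decoding in stages two and three actually resolves all $8$ states in each outcome triple without any leftover ambiguity. The design above makes this routine because every $\xi_i^{\pm}$ pair differs from its neighbours either in the support of the $\mathcal{H}_{-}$ factors or in the sign of the unique $\mathcal{H}_{+}$ factor, and these two features are precisely what stages two and three measure. Showing non-redundancy on the discard side requires an analogous enumeration, but the pair $\{\xi_{1}^{+},\xi_{3}^{+}\}$ and symmetric counterparts cover the relevant cases thanks to the uniform "same qubit marginal'' structure noted above.
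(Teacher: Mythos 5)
Your LOCC-distinguishability argument is correct, and it is a cleaner packaging of what the paper does: the paper runs an adaptive tree in which each party applies $\mathcal{M}_1=\{P[\ket{\mathbf{0}-\mathbf{4}}],P[\ket{\mathbf{1}-\mathbf{5}}],P[\ket{\mathbf{2}-\mathbf{3}}],P_4\}$ in sequence and then resolves the residual $\pm$ ambiguity with $\mathcal{M}_2$ or $\mathcal{M}_3$, whereas you first apply the coarse, non-disturbing $\{\Pi_-,\Pi_+\}$ splitting of each $\mathbb{C}^6$ into $\mathcal{H}_-\oplus\mathcal{H}_+$ and only then refine. Since your $\Pi_+$ is exactly the paper's $P_4$ and your stage-two basis is $\{P_1,P_2,P_3\}$, the two protocols are operationally the same; your organisation makes the ``which family, then which sign'' logic transparent without the case tree. (Minor point: your stage-three two-element measurement should be completed to a full POVM on $\mathcal{H}_+$, e.g.\ by adding $P[\ket{\mathbf{2}+\mathbf{3}}]$, as the paper does with $Q_3$ and $R_3$; this is cosmetic since the state is guaranteed to lie in the relevant two-dimensional subspace.)

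The non-redundancy half has a genuine gap: the witness pair $\{\ket{\xi_1^+},\ket{\xi_3^+}\}$ does not do the job for single-subsystem discards, which are precisely the cases that must be checked. Orthogonality of a pair of product states is destroyed only if it is destroyed in \emph{every} local factor simultaneously, and $\ket{\xi_1^+}=\ket{\mathbf{0}-\mathbf{4}}\ket{\mathbf{1}-\mathbf{5}}\ket{\mathbf{0}+\mathbf{1}+\mathbf{4}+\mathbf{5}}$ and $\ket{\xi_3^+}=\ket{\mathbf{1}-\mathbf{5}}\ket{\mathbf{2}-\mathbf{3}}\ket{\mathbf{0}+\mathbf{1}+\mathbf{4}+\mathbf{5}}$ are orthogonal in \emph{both} Alice's and Bob's factors. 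Hence if only Alice discards her qubit (or qutrit), Bob's untouched factors $\ket{\mathbf{1}-\mathbf{5}}\perp\ket{\mathbf{2}-\mathbf{3}}$ keep the pair orthogonal, and symmetrically for a Bob-only discard; and since the pair shares Charlie's factor, no discard confined to Charlie can be witnessed by it either. In fact, for a discard confined to one party, the only valid witnesses are pairs whose orthogonality is carried by that party's factor alone, and in $\mathcal{G}_3$ these are exactly the same-index pairs $\{\ket{\xi_i^+},\ket{\xi_i^-}\}$ with the four-term vector $\ket{\mathbf{0}\pm\mathbf{1}+\mathbf{4}\pm\mathbf{5}}$ (or $\ket{\mathbf{0}\pm\mathbf{2}+\mathbf{4}\pm\mathbf{3}}$) sitting on the discarding party: there one checks that tracing the qubit out of $\ket{\mathbf{0}\pm\mathbf{1}+\mathbf{4}\pm\mathbf{5}}$ gives $\tfrac12\ket{\alpha_\pm}\bra{\alpha_\pm}+\tfrac14\mathbb{I}$ and tracing the qutrit out gives $\tfrac12(\ket{\alpha_\pm}\bra{\alpha_\pm}+\ket{\beta_\pm}\bra{\beta_\pm})$ with $\ket{\alpha_\pm}=(\ket{0}\pm\ket{1})/\sqrt2$, $\ket{\beta_\pm}=(\ket{1}\pm\ket{2})/\sqrt2$, neither of which is orthogonal to its sign partner. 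This is the paper's Cases II--V, with $\{\ket{\xi_1^\pm}\}$ covering Charlie-side discards and $\{\ket{\xi_6^\pm}\}$, $\{\ket{\xi_{11}^\pm}\}$ covering the rest by symmetry. Your general strategy survives, but the class of witnesses must be changed from cross-index pairs sharing a factor to the same-index $\pm$ pairs.
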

\begin{proof}
We first provide a brief outline of the proof that the above set of states are free from local redundancy. The detailed proof is given in the appendix. Note that, the quantum system possessed by each individuals can only be composed of a qubit and qutrit subsystem. Therefore, for each player we can write: $\ket{\mathbf{0}}:=\ket{00},\ket{\mathbf{1}}:=\ket{01},\ket{\mathbf{2}}:=\ket{02},\ket{\mathbf{3}}:=\ket{10},\ket{\mathbf{4}}:=\ket{11},\ket{\mathbf{5}}:=\ket{12}$. First consider the players discard their subsystems in such a way that they ultimately get the dimension of the whole tripartite system below $27$. This is possible when more than one player discard their qutrits (that is, $\mathbb{C}^{2}\otimes\mathbb{C}^{2}\otimes\mathbb{C}^{2}$, or, $\mathbb{C}^{2}\otimes\mathbb{C}^{2}\otimes\mathbb{C}^{6}$, or, $\mathbb{C}^{2}\otimes\mathbb{C}^{3}\otimes\mathbb{C}^{3}$ etc.). In this case, it is clear that all the states in (\ref{str}) will not retain their orthogonality. Other possible cases of discarding the sub-parts can be branched as follows: any one player discards their qutrit, any one player discards their qubit and more than one player discard their qubits. Though cumbersome but it is quite straightforward to show that in all these cases the reduced states' set will not be orthogonal anymore. Therefore, we can conclude that the set $\mathcal{G}_3$ does not have local redundancy.  

Now, we move to the proof that the set $\mathcal{G}_3$ can be distinguished with the help of LOCC alone. Due to the symmetries present in the set $\mathcal{G}_3$, each player may need to perform any of the following three measurements at different steps of the protocol.
{\small
\begin{eqnarray}\nonumber
\mathcal{M}_{1}&&\equiv\{P_1:=P[\ket{\mathbf{0-4}}],P_2:={P}[\ket{\mathbf{1-5}}], P_3:={P}[\ket{\mathbf{2-3}}],\\\nonumber&&P_4:=\mathbb{I}-({P}[\ket{\mathbf{0-4}}]+{P}[\ket{\mathbf{1-5}}]+ {P}[\ket{\mathbf{2-3}}])\}\\\nonumber
\mathcal{M}_{2}&&\equiv\{Q_1:=P[\ket{\mathbf{0+1+4+5}}],Q_2:=P[\ket{\mathbf{0-1+4-5}}],\\\nonumber&&Q_3:=\mathbb{I}-(P[\ket{\mathbf{0+1+4+5}}]+[\ket{\mathbf{0-1+4-5}}])\}\\\nonumber
\mathcal{M}_{3}&&\equiv\{R_1:=P[\ket{\mathbf{0+2+4+3}}],R_2:=P[\ket{\mathbf{0-2+4-3}}],\\\nonumber&&R_3:=\mathbb{I}-(P[\ket{\mathbf{0+2+4+3}}]+{P}[\ket{\mathbf{0-2+4-3}}])\}.
\end{eqnarray}
}
The detailed protocol is pictorially described in the Appendix.
\end{proof}
\begin{theorem}
The set $\mathcal{G}_3$ can be deterministically transformed via local OPMs to a orthogonal set of tripartite product states which are locally irreducible even if all but one player come together. 
\end{theorem}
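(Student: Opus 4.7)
My plan is to have each of the three parties perform the same binary rank-$3$ local OPM that splits her/his $\mathbb{C}^6$ along the qubit factor of the qubit--qutrit decomposition already used in Proposition~\ref{proposition2}, namely $\mathcal{K}\equiv\{K_0:=P[\ket{\mathbf{0}},\ket{\mathbf{1}},\ket{\mathbf{2}}],\ K_1:=P[\ket{\mathbf{3}},\ket{\mathbf{4}},\ket{\mathbf{5}}]\}$. The set $\mathcal{G}_3$ is built entirely from five single-party primitives, each of which behaves cleanly under both outcomes. Specifically, $\ket{\mathbf{0}-\mathbf{4}}, \ket{\mathbf{1}-\mathbf{5}}, \ket{\mathbf{2}-\mathbf{3}}$ project under $K_0$ to $\ket{\mathbf{0}}, \ket{\mathbf{1}}, \ket{\mathbf{2}}$ and under $K_1$ to $-\ket{\mathbf{4}}, -\ket{\mathbf{5}}, -\ket{\mathbf{3}}$, while the ``plus-type'' primitives $\ket{\mathbf{0}\pm\mathbf{1}+\mathbf{4}\pm\mathbf{5}}$ and $\ket{\mathbf{0}\pm\mathbf{2}+\mathbf{4}\pm\mathbf{3}}$ split into $\ket{\mathbf{0}\pm\mathbf{1}}$ or $\ket{\mathbf{4}\pm\mathbf{5}}$ and $\ket{\mathbf{0}\pm\mathbf{2}}$ or $\ket{\mathbf{4}\pm\mathbf{3}}$ respectively. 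Hence no state of $\mathcal{G}_3$ is annihilated by any outcome, so the transformation is automatically deterministic.

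I would then apply the local unitary relabelings $(\mathbf{0},\mathbf{1},\mathbf{2})\mapsto(0,1,2)$ on any party that registered $K_0$ and $(\mathbf{4},\mathbf{5},\mathbf{3})\mapsto(0,1,2)$ on any party that registered $K_1$. Under these relabelings the two halves of each party's $\mathbb{C}^6$ become indistinguishable, so every one of the $2^3=8$ outcome strings produces the same $27$-element orthogonal product basis in $\mathbb{C}^3\otimes\mathbb{C}^3\otimes\mathbb{C}^3$ (up to overall signs irrelevant for nonlocality). Writing this universal reduced basis out branch by branch yields three fully diagonal states $\ket{0}\ket{0}\ket{0}, \ket{1}\ket{1}\ket{1}, \ket{2}\ket{2}\ket{2}$ (images of $\ket{\xi_5}, \ket{\xi_{10}}, \ket{\xi_{15}}$) together with three cyclic families of eight states in which exactly one party carries a $\ket{0\pm 1}$ or $\ket{0\pm 2}$ superposition and the other two hold computational-basis vectors. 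This is, up to a fixed permutation of computational-basis labels, the very set for which Halder \textit{et al.}~\cite{halderprl} established the strongest form of quantum nonlocality without entanglement.

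With that identification in hand, local irreducibility under every bipartite merger follows directly from the results of \cite{halderprl} on each outcome branch, and the theorem follows. I expect the main obstacle to lie in the term-by-term matching: one must check on every one of the $8$ outcome strings that each $\ket{\xi_i^\pm}$ and $\ket{\xi_j}$ maps to the appropriate member of the Halder \textit{et al.}\ basis. The built-in symmetry $\mathbf{0}\!\leftrightarrow\!\mathbf{4}$, $\mathbf{1}\!\leftrightarrow\!\mathbf{5}$, $\mathbf{2}\!\leftrightarrow\!\mathbf{3}$ of $\mathcal{G}_3$ reduces this verification to a single outcome branch, but if the match were to fail in some position the fallback strategy is to redo the bipartite-merger obstruction argument directly on the reduced basis, case-splitting on which pair of parties has merged and exploiting the fact that the superposition vectors $\ket{0\pm 1}, \ket{0\pm 2}$ impose linear constraints incompatible with any nontrivial rank-$1$ orthogonality-preserving measurement carried out by the isolated party.
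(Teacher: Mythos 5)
Your proposal is correct and takes essentially the same route as the paper: the identical local OPM $\{P[\ket{\mathbf{0}},\ket{\mathbf{1}},\ket{\mathbf{2}}],P[\ket{\mathbf{3}},\ket{\mathbf{4}},\ket{\mathbf{5}}]\}$ performed by all three parties, the observation that each of the $2^3$ outcome branches yields (after relabeling) the same $27$-state set in $\mathbb{C}^3\otimes\mathbb{C}^3\otimes\mathbb{C}^3$, and the identification of that set with the strongly nonlocal basis of Halder \emph{et al.}~\cite{halderprl}. Your explicit remarks on non-annihilation (hence determinism) and on the per-branch relabeling only make more explicit what the paper states in its generic form~(\ref{str1}).
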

\begin{proof}
Suppose, each player performs a specific orthogonality preserving local measurement: $\mathcal{K}\equiv\{K_1:=P[\ket{\mathbf{0}},\ket{\mathbf{1}},\ket{\mathbf{2}}],K_2:=P[\ket{\mathbf{3}},\ket{\mathbf{4}},\ket{\mathbf{5}}]\}$. Here, the notation we follow is as follows: $K_j^i$ is the $j$th projector ($K_j$) that clicks when $i$th player performs the measurement $\mathcal{K}$. Therefore, after the measurement a total of eight possibilities can occur as each player can get any one of two possible outcomes $K_1^i$ or $K_2^i$. In each of these eight cases it is straightforward to see that the updated set of $27$ states will be of the following generic form. 
\begin{align}\label{str1}
\left\{\!\begin{aligned}
\tri{p}{q}{\eta_\pm},~\tri{q}{\eta_\pm}{p},~\tri{\eta_\pm}{p}{q},\\
\tri{p}{r}{\kappa_\pm},~\tri{r}{\kappa_\pm}{p},~\tri{\kappa_\pm}{p}{r},\\
\tri{q}{r}{\eta_\pm},~\tri{r}{\eta_\pm}{q},~\tri{\eta_\pm}{q}{r},\\
\tri{r}{q}{\kappa_\pm},~\tri{q}{\kappa_\pm}{r},~\tri{\kappa_\pm}{r}{q},\\
\tri{p}{p}{p},~\tri{q}{q}{q},~\tri{r}{r}{r}~~~
\end{aligned}\right\}	
\end{align}
Here, $\ket{\eta_\pm}:=(\ket{p}\pm\ket{q})/\sqrt{2}$ and $\ket{\kappa_\pm}:=(\ket{p}\pm\ket{r})/\sqrt{2}$. In each of the eight outcomes, for all $27$ states $p,q$ and $r$ will have some specific values from $p\in\{\mathbf{0,4}\},q\in\{\mathbf{1,5}\}$ and $r\in\{\mathbf{2,3}\}$. For example, if for all the players, the outcomes are $K_1$ throughout then the reduced set of states will be of the above form with $p=\{\mathbf{0}\},q=\{\mathbf{1}\}$ and $r=\{\mathbf{2}\}$.

Note that the above set of states is basically the orthogonal set that manifests strong quantum nonlocality without entanglement \cite{halderprl}.
\end{proof}
At this end, one may be further curious to activate such a strongest possible genuine quantum nonlocality without entanglement by performing a measurement on the possession of a single party. This has vivid importance in the framework of data hiding and secret sharing between all but one untrusted parties. Precisely speaking, in such a scenario the particular trusted agent (personified as Charlie) has full authority to judge how trustworthy are the other parties and depending upon that he may compel others to meet him in person to decode a hidden secret. As an example consider the following set $\mathcal{G}_{4}$ of  $27$ orthogonal product states $\ket{\zeta_i^{\pm}},~i\in\{1,\cdots,4,6,\cdots,9,11,\cdots,14\}$ and $\ket{\zeta_j},~j\in\{5,10,15\}$ in  $\mathbb{C}^{3^{\otimes2}}\otimes\mathbb{C}^{6}$,
\begin{subequations}\label{stra}
\begin{eqnarray}
\ket{\zeta_1^{\pm}}&=&\ket{0}\ket{1}\ket{\mathbf{0}\pm\mathbf{1}+\mathbf{4}\pm\mathbf{5}}\\
\ket{\zeta_2^{\pm}}&=&\ket{0}\ket{2}\ket{\mathbf{0}\pm\mathbf{2}+\mathbf{4}\pm\mathbf{3}}\\
\ket{\zeta_3^{\pm}}&=&\ket{{1}}\ket{{2}}\ket{\mathbf{0}\pm\mathbf{1}+\mathbf{4}\pm\mathbf{5}}\\
\ket{\zeta_4^{\pm}}&=&\ket{{2}}\ket{{1}}\ket{\mathbf{0}\pm\mathbf{2}+\mathbf{4}\pm\mathbf{3}}\\
\ket{\zeta_5}&=&\ket{{0}}\ket{{0}}\ket{\mathbf{0}-\mathbf{4}}\\
\ket{\zeta_6^{\pm}}&=&\ket{{1}}\ket{{0}\pm{1}}\ket{\mathbf{0}-\mathbf{4}}\\
\ket{\zeta_7^{\pm}}&=&\ket{{2}}\ket{{0}\pm{2}}\ket{\mathbf{0}-\mathbf{4}}\\
\ket{\zeta_8^{\pm}}&=&\ket{{2}}\ket{{0}\pm{1}}\ket{\mathbf{1}-\mathbf{5}}\\
\ket{\zeta_9^{\pm}}&=&\ket{{1}}\ket{{0}\pm{2}}\ket{\mathbf{2}-\mathbf{3}}\\
\ket{\zeta_{10}}&=&\ket{{1}}\ket{{1}}\ket{\mathbf{1}-\mathbf{5}}\\
\ket{\zeta_{11}^{\pm}}&=&\ket{{0}\pm{1}}\ket{{0}}\ket{\mathbf{1}-\mathbf{5}}\\
\ket{\zeta_{12}^{\pm}}&=&\ket{{0}\pm{2}}\ket{{0}}\ket{\mathbf{2}-\mathbf{3}}\\
\ket{\zeta_{13}^{\pm}}&=&\ket{{0}\pm{1}}\ket{{1}}\ket{\mathbf{2}-\mathbf{3}}\\
\ket{\zeta_{14}^{\pm}}&=&\ket{{0}\pm{2}}\ket{{2}}\ket{\mathbf{1}-\mathbf{5}}\\
\ket{\zeta_{15}}&=&\ket{{2}}\ket{{2}}\ket{\mathbf{2}-\mathbf{3}},
\end{eqnarray}
\end{subequations}


\begin{proposition}\label{propositiona}
The set $\mathcal{G}_{4}$ is distinguishable under LOCC and free from local redundancy. 
\end{proposition}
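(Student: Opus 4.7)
The plan is to verify the two claims --- absence of local redundancy and LOCC distinguishability --- separately, exploiting the same decomposition $C = c_1 \otimes c_2$ (qubit $\otimes$ qutrit) of Charlie's $\mathbb{C}^6$ used in earlier propositions, while noting that Alice's and Bob's qutrits have prime dimension and so admit no nontrivial factorisation.

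For the redundancy claim, I would enumerate the genuinely distinct single-subsystem discards --- Alice, Bob, $c_1$, $c_2$, and the whole of $C$ --- and, for each, exhibit a pair of states whose reductions become non-orthogonal. The pair $\ket{\zeta_{11}^\pm}$ differs only on Alice and so coincides after discarding her; $\ket{\zeta_8^\pm}$ plays the analogous role for Bob; and $\ket{\zeta_1^\pm}$ handles each of the three Charlie-discards. For the two partial Charlie-discards, a short computation of the $c_2$- and $c_1$-reductions of $\ket{\mathbf{0\pm 1 + 4 \pm 5}}\bra{\mathbf{0\pm 1 + 4 \pm 5}}$ yields, in each case, two density operators whose supports overlap on a full-rank subspace, hence with non-zero product. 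Any multi-subsystem discard contains one of these minimal discards and inherits the non-orthogonality.

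For distinguishability, I would use a tree-shaped protocol rooted at Charlie. Charlie first performs the 4-outcome measurement $\mathcal{M}_1$ from the proof of Proposition~\ref{proposition2}, with projectors $P_1 = P[\ket{\mathbf{0-4}}]$, $P_2 = P[\ket{\mathbf{1-5}}]$, $P_3 = P[\ket{\mathbf{2-3}}]$, and $P_4 = \mathbb{I} - (P_1+P_2+P_3)$. Because each of $\ket{\mathbf{0\pm 1+4\pm 5}}$ and $\ket{\mathbf{0\pm 2+4\pm 3}}$ is orthogonal to every one of $\ket{\mathbf{0-4}}, \ket{\mathbf{1-5}}, \ket{\mathbf{2-3}}$, the four outcomes cleanly partition $\mathcal{G}_4$ into $\{\ket{\zeta_5}, \ket{\zeta_6^\pm}, \ket{\zeta_7^\pm}\}$, $\{\ket{\zeta_8^\pm}, \ket{\zeta_{10}}, \ket{\zeta_{11}^\pm}, \ket{\zeta_{14}^\pm}\}$, $\{\ket{\zeta_9^\pm}, \ket{\zeta_{12}^\pm}, \ket{\zeta_{13}^\pm}, \ket{\zeta_{15}}\}$, and $\{\ket{\zeta_1^\pm}, \ket{\zeta_2^\pm}, \ket{\zeta_3^\pm}, \ket{\zeta_4^\pm}\}$, with pairwise orthogonality preserved. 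In the $P_1$ and $P_4$ branches Alice's reduced kets are already computational basis vectors, so an Alice computational-basis measurement followed by Bob's and then Charlie's residual orthogonality-preserving measurements peels off the remaining ambiguity.

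The main obstacle is the $P_2$ branch (the $P_3$ branch being structurally identical after relabelling), where seven states remain with the $\pm$ superposition alternating between Alice and Bob: a naive computational-basis measurement on either party would collapse $\ket{\zeta_{11}^\pm}$ or $\ket{\zeta_8^\pm}$ to a common state and destroy orthogonality. I plan to resolve this by a three-step cascade. First Bob projects onto $\mathrm{span}\{\ket{0},\ket{1}\}$ versus $\ket{2}$: the $\ket{2}$ outcome isolates $\ket{\zeta_{14}^\pm}$, which Alice then separates in the basis $\{\ket{0+2},\ket{0-2},\ket{1}\}$. On the two-dimensional branch, Alice next measures in $\{\ket{2},\ket{0+1},\ket{0-1}\}$: the $\ket{2}$ outcome isolates $\ket{\zeta_8^\pm}$ (which Bob then separates), while each $\ket{0\pm 1}$ outcome leaves only the pair $\{\ket{\zeta_{10}},\ket{\zeta_{11}^{\pm}}\}$, whose Bob parts $\ket{1}$ and $\ket{0}$ are manifestly orthogonal and distinguished by Bob's final computational measurement. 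Checking that each intermediate projection preserves pairwise inner products of the surviving states then reduces to elementary calculations given the basis choices above.
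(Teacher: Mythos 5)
Your proposal is correct and follows essentially the same route as the paper: the same qubit--qutrit decomposition of Charlie's $\mathbb{C}^{6}$ with explicit non-orthogonal pairs for the no-local-redundancy claim, and the same tree protocol rooted at Charlie's four-outcome measurement $\{P[\ket{\mathbf{0-4}}],P[\ket{\mathbf{1-5}}],P[\ket{\mathbf{2-3}}],\mathbb{I}-(\cdot)\}$ for LOCC distinguishability, terminating in two-state branches handled by Walgate-type discrimination. The only cosmetic difference is in the $P_2$ (and symmetric $P_3$) branch, where you have Alice measure $\{\ket{2},\ket{0+1},\ket{0-1}\}$ directly instead of the paper's coarser $\{P[(\ket{0},\ket{1})],P[\ket{2}]\}$ followed by Bob's computational measurement; both orderings preserve orthogonality at every step.
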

\begin{proof}
The proof that the set $\mathcal{G}_4$ does not have redundancy is quite straightforward. One may consider that the subsystem of Charlie ($\mathbb{C}^6$) consists of a qubit and qutrit. Now, if we discard any of the qubit or qutrit parts, not all pairs that remain would be orthogonal. The proof is quite evident from the proof of proposition \ref{proposition2}. 

We will now move to describe a local discrimination protocol of the set $\mathcal{G}_4$. We will provide here a brief outline of the protocol.

Charlie first performs a measurement
{\small
\begin{eqnarray}\nonumber
\mathcal{M}^C_{1}&&\equiv\{P_1:=P[\ket{\mathbf{0-4}}_C],P_2:={P}[\ket{\mathbf{1-5}}_C], P_3:={P}[\ket{\mathbf{2-3}}_C],\\\nonumber&&P_4:=\mathbb{I}-({P}[\ket{\mathbf{0-4}}_C]+{P}[\ket{\mathbf{1-5}}_C]+ {P}[\ket{\mathbf{2-3}}_C])\}
\end{eqnarray}}
Now, depending upon different outcomes, Bob and Charlie will perform some suitable measurements at their local sites to distinguish the set. A step by step detailed analysis is provided in the Appendix.
\end{proof}

\begin{theorem}
The set $\mathcal{G}_4$ can be deterministically transformed, via a single local OPM at Charlie's site, to an orthogonal set of tripartite product states which is locally irreducible in every bipartition. 
\end{theorem}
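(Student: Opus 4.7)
The plan is to parallel the proof of the preceding theorem but now having only Charlie act on his $6$-dimensional subsystem. First, I would have Charlie perform the local OPM $\mathcal{K}\equiv\{K_1:=P[(\ket{\mathbf{0}},\ket{\mathbf{1}},\ket{\mathbf{2}})],K_2:=P[(\ket{\mathbf{3}},\ket{\mathbf{4}},\ket{\mathbf{5}})]\}$, which splits his space into two orthogonal $3$-dimensional subspaces. Under $K_1$, Charlie's four-term superpositions $\ket{\mathbf{0}\pm\mathbf{1}+\mathbf{4}\pm\mathbf{5}}$ and $\ket{\mathbf{0}\pm\mathbf{2}+\mathbf{4}\pm\mathbf{3}}$ project (up to normalisation) to the two-term states $\ket{\mathbf{0}\pm\mathbf{1}}$ and $\ket{\mathbf{0}\pm\mathbf{2}}$, while $\ket{\mathbf{0}-\mathbf{4}}, \ket{\mathbf{1}-\mathbf{5}}, \ket{\mathbf{2}-\mathbf{3}}$ collapse to the single kets $\ket{\mathbf{0}}, \ket{\mathbf{1}}, \ket{\mathbf{2}}$; the $K_2$ branch behaves dually, mapping the same objects into $\{\ket{\mathbf{4}\pm\mathbf{5}},\ket{\mathbf{4}\pm\mathbf{3}}\}$ and $\{\ket{\mathbf{4}},\ket{\mathbf{5}},\ket{\mathbf{3}}\}$.

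Next, I would introduce the labels $p\equiv\mathbf{0}, q\equiv\mathbf{1}, r\equiv\mathbf{2}$ on Charlie (for the $K_1$ branch; $\mathbf{4},\mathbf{5},\mathbf{3}$ for $K_2$) and $p\equiv 0, q\equiv 1, r\equiv 2$ on Alice and Bob, together with $\ket{\eta_\pm}:=(\ket{p}\pm\ket{q})/\sqrt{2}$ and $\ket{\kappa_\pm}:=(\ket{p}\pm\ket{r})/\sqrt{2}$. Then I would walk through the 15 families $\ket{\zeta_i^{\pm}}, \ket{\zeta_j}$ of $\mathcal{G}_4$ and verify case by case that the 27 post-measurement states exactly recover the generic template (\ref{str1}) of the preceding theorem; for instance $\ket{\zeta_1^\pm}\mapsto\tri{p}{q}{\eta_\pm}$, $\ket{\zeta_5}\mapsto\tri{p}{p}{p}$, $\ket{\zeta_{11}^\pm}\mapsto\tri{\eta_\pm}{p}{q}$, $\ket{\zeta_{15}}\mapsto\tri{r}{r}{r}$, and so on for all remaining members. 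The design of $\mathcal{G}_4$ is tailored precisely so that this single Charlie-side projection reconstructs the permutation-symmetric Shifts-type structure that realises strong quantum nonlocality without entanglement.

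Finally, I invoke the result established in the preceding theorem and originally in \cite{halderprl} that the set in the form (\ref{str1}) is locally irreducible even when all but one parties assemble, i.e., locally irreducible in every bipartition. Because both outcomes of $\mathcal{K}$ deliver a set isomorphic to (\ref{str1}), the transformation is deterministic and the theorem follows. The main obstacle, and essentially the only nontrivial step, is the 15-family bookkeeping in the middle: unlike in the previous theorem where all three parties measured, here Alice and Bob remain passive, so one has to verify that their untouched qutrits naturally occupy the $\{p,q,r\}$ slots of the template (\ref{str1}). This is guaranteed by the qutrit structure of $\mathcal{G}_4$'s first two subsystems, but it is the consistency check that warrants explicit enumeration.
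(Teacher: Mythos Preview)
Your proposal is correct and follows essentially the same route as the paper: Charlie performs the two-outcome OPM $\{P[(\ket{\mathbf{0}},\ket{\mathbf{1}},\ket{\mathbf{2}})],P[(\ket{\mathbf{3}},\ket{\mathbf{4}},\ket{\mathbf{5}})]\}$, each branch is identified with the strongly nonlocal $27$-state set of \cite{halderprl}, and the deterministic conclusion follows. The only cosmetic difference is that the paper writes the post-measurement set in a dedicated template (\ref{str2}) keeping Alice/Bob's qutrit labels $0,1,2$ explicit while abstracting only Charlie's into $p,q,r$, whereas you relabel all three parties to fit the earlier template (\ref{str1}); the content is identical.
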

\begin{proof}
Consider that Charlie performs a local OPM, 
$\mathcal{K}^C\equiv\{K_1^C:=P[\ket{\mathbf{0}},\ket{\mathbf{1}},\ket{\mathbf{2}}_C],K_2^C:=P[\ket{\mathbf{3}},\ket{\mathbf{4}},\ket{\mathbf{5}}_C]\}$
For different outcomes of $\mathcal{K}^C$, the post measurement state will turn out to be any of the following set:
\begin{align}\label{str2}
\left\{\!\begin{aligned}
\tri{0}{1}{{\tilde{\nu}_\pm}},~\tri{1}{\nu_\pm}{p},~\tri{\nu_\pm}{o}{q},\\
\tri{0}{2}{\tilde{\tau}_\pm},~\tri{2}{\tau_\pm}{p},~\tri{\tau_\pm}{o}{r},\\
\tri{1}{2}{\tilde{\nu}_\pm},~\tri{2}{\nu_\pm}{q},~\tri{\nu_\pm}{1}{r},\\
\tri{2}{1}{\tilde{\tau}_\pm},~\tri{1}{\tau_\pm}{r},~\tri{\tau_\pm}{2}{q},\\
\tri{0}{0}{p},~\tri{1}{1}{q},~\tri{2}{2}{r}~~~
\end{aligned}\right\}	
\end{align}
where, Here, $\ket{\nu_\pm}:=(\ket{0}\pm\ket{1})/\sqrt{2}$, $\ket{\tau_\pm}:=(\ket{0}\pm\ket{2})/\sqrt{2}$, $\ket{\tilde{\nu}_\pm}:=(\ket{p}\pm\ket{q})/\sqrt{2}$ and $\ket{\tilde{\tau}_\pm}:=(\ket{p}\pm\ket{r})/\sqrt{2}$. $p,q$ and $r$ can have any value $\{\mathbf{0,4}\}$, $\{\mathbf{1,5}\}$ and $\{\mathbf{2,3}\}$ respectively.
Now, when $K_1^C$ clicks, the post measurement state can be any of the set (\ref{str2}) with $(p,q,r)=(\mathbf{0,1,2})$. Otherwise, if $K_2^C$ clicks, they are left with any of set (\ref{str2}) where $(p,q,r)=(\mathbf{4,5,3})$. It is evident that the set (\ref{str2}) shows strong quantum nonlocality without entanglement \cite{halderprl,gnps}. This completes our proof.

\end{proof}

\section{Discussion}
In summary, we have studied the genuine activation of nonlocality from several sets of local states. However, the phrases "local" and "nonlocal" have been used from the state discrimination perspective. More precisely, we have dealt with two different sets of locally distinguishable multipartite product states in $\mathbb{C}^{3}\otimes\mathbb{C}^{6}$ and $\mathbb{C}^{2^{\otimes2}}\otimes\mathbb{C}^{4}$, which can be transformed to the set of locally indistinguishable states in $\mathbb{C}^{3^{\otimes2}}$ and $\mathbb{C}^{2^{\otimes3}}$ respectively, by choosing an appropriate measurement in possession of one of the parties. Furthermore, we have considered a stronger notion of state discrimination problem, namely the orthogonality preserving reducibility and have shown to activate such a notion from a set of multipartite locally distinguishable product states of dimension $\mathbb{C}^{6^{\otimes3}}$. It is observed that under LOCC, the set can be transformed deterministically to a set of states in $\mathbb{C}^{3^{\otimes3}}$, which is even irreducible in all possible bipartitions. Further, we have moved to a stricter notion of such activation where the transformation is possible to achieve by a single agent only. Such an example is demonstrated to transform a locally distinguishable set of states of $\mathbb{C}^{3^{\otimes2}}\otimes\mathbb{C}^{6}$ to a strongly irreducible one in $\mathbb{C}^{3^{\otimes3}}$. The elegance of state construction and the transformation protocol makes it trivial to extend in any arbitrary higher dimensional set of product states exhibiting nonlocality in terms of local discrimination and orthogonality preserving elimination. Besides its foundational interest to understand the topology of the state spaces of composite quantum systems, our work deserves significant importance from the practical perspective. It has mimicked an interesting framework of secured data hiding between several parties, where the distributor is flexible to update the distinguishibility of the secured data hidden in the correlation of the given states. 

\section{Acknowledgement}
We would like to acknowledge stimulating discussions with Guruprasad Kar, Manik Banik, Saronath Halder and Ramij Rahaman. Tamal Guha would like to acknowledge his academic visit at Indian Statistical Institute, Kolkata, during November-December of 2021.


\bibliography{SR}

\onecolumngrid 
\appendix
\onecolumngrid 
\section*{Proof of Proposition \ref{proposition2}}
\begin{proof}
{We will first prove that the above set of states are free from local redundancy through a detailed case-wise analysis:}

{\textbf{Case-I: \textit{More than one} players discard their qutrits:} Observe that, such operations lead to $27$ tripartite states in $\mathbb{C}^{2}\otimes\mathbb{C}^{2}\otimes\mathbb{C}^{2}$ (when all the qutrits have been traced out) or in $\mathbb{C}^{2}\otimes\mathbb{C}^{2}\otimes\mathbb{C}^{6}$ or its party permutations (when two of the qutrits have been traced out). Such reduced sets can not be orthogonal in any way.}

{\textbf{Case-II: \textit{Any one} player discards their qutrit:} Without loss of generality, we can analyze the case when Charlie has traced out the qutrit system. Same argument also works for other two players.

Note that under such an operation (discarding the qutrit) only possibility of non-orthogonality arise from any of the twelve pairs of $\{\ket{\xi_{i}^{\pm}}\}$.

Again, from the symmetric structure of these states, the non-orthogonality of all those $12$ pairs can be sufficiently guaranteed by checking only $\{\ket{\xi_{1}^{\pm}}\}$    and $\{\ket{\xi_{6}^{\pm}}\}$. Checking the first one is sufficient for $i\in\{1,2,3,4\}$ pairs and the second one is sufficient for $i\in\{6,7,8,9,11,12,13,14\}$ pairs.
From now on, we will use the party annotation for clarity. We will denote each party's qubit system with $1$ in subscript and qutrit with $2$ in subscript. For example, $\ket{\xi_{1}^{\pm}}_{{a_1a_2b_1b_2c_1c_2}}$ implies that Alice, Bob and Charlies' qubits are denoted by the subsystems $a_1,b_1$ and $c_1$, whereas their qutrit subsystems are denoted by $a_2,b_2$ and $c_2$. 

Now, tracing out Charlie's qutrit from $\ket{\xi_{1}^{\pm}}_{{a_1a_2b_1b_2c_1c_2}}$, we will obtain, $\rho_{{a_1a_2b_1b_2c_1}}^{1\pm}=\text{Tr}_{c_2}(\ket{\xi_{1}^{\pm}}\bra{\xi_{1}^{\pm}}_{{a_1a_2b_1b_2c_1c_2}})=\ket{\mathbf{0-4}}\bra{\mathbf{0-4}}_{a_1a_2}\otimes\ket{\mathbf{1-5}}\bra{\mathbf{1-5}}_{b_1b_2}\otimes(\frac{1}{2}\ket{\alpha_\pm}\bra{\alpha_\pm}+\frac{1}{2}\times\frac{\mathbb{I}}{2})_{c_1}$, where $\ket{\alpha_\pm}=\frac{\ket{0}\pm\ket{1}}{\sqrt{2}}$ and the superscript of $\rho_{{a_1a_2b_1b_2c_1}}^{1\pm}$ implies that it is a reduced state of $\ket{\xi_{1}^{\pm}}$. It trivially follows that the states $\rho_{{a_1a_2b_1b_2c_1}}^{1\pm}$ are not orthogonal to each other.}

{\textbf{Case-III: \textit{Any one} player discards their qubit:} It is sufficient to analyze the case $\mathbb{C}^{6}\otimes\mathbb{C}^{6}\otimes\mathbb{C}^{3}$. This does not lose any generality.

Consider the states $\ket{\xi_{1}^{\pm}}_{{a_1a_2b_1b_2c_1c_2}}$. Discarding Charlie's qubit from this state, we get $\rho_{{a_1a_2b_1b_2c_2}}^{1\pm}={Tr}_{c_1}(\ket{\xi_{1}^{\pm}}\bra{\xi_{1}^{\pm}}_{{a_1a_2b_1b_2c_1c_2}})=\ket{\mathbf{0-4}}\bra{\mathbf{0-4}}_{a_1a_2}\otimes\ket{\mathbf{1-5}}\bra{\mathbf{1-5}}_{b_1b_2}\otimes(\frac{1}{2}\ket{\alpha_\pm}\bra{\alpha_\pm}+\frac{1}{2}\ket{\beta_{\pm}}\bra{\beta_{\pm}})_{c_2}$, where $\ket{\alpha_\pm}=\frac{\ket{0}\pm\ket{1}}{\sqrt{2}}$ and $\ket{\beta_{\pm}}=\frac{\ket{1}\pm\ket{2}}{\sqrt{2}}$. Note that, $\ket{\alpha_\pm}$ is not orthogonal to $\ket{\beta_{\pm}}$, it is, thus, straightforward to assure the non-orthogonality of $\rho_{{a_1a_2b_1b_2c_2}}^{1\pm}$.}

{\textbf{Case-IV: \textit{More than one} players discard their qubits:} The non-orthogonality depicted in Case-III, guarantees that tracing out more systems from $\rho_{{a_1a_2b_1b_2c_2}}^{1\pm}$ can not make them orthogonal.}\par

\begin{figure}[h!]
	\centering
	\includegraphics[width=5in]{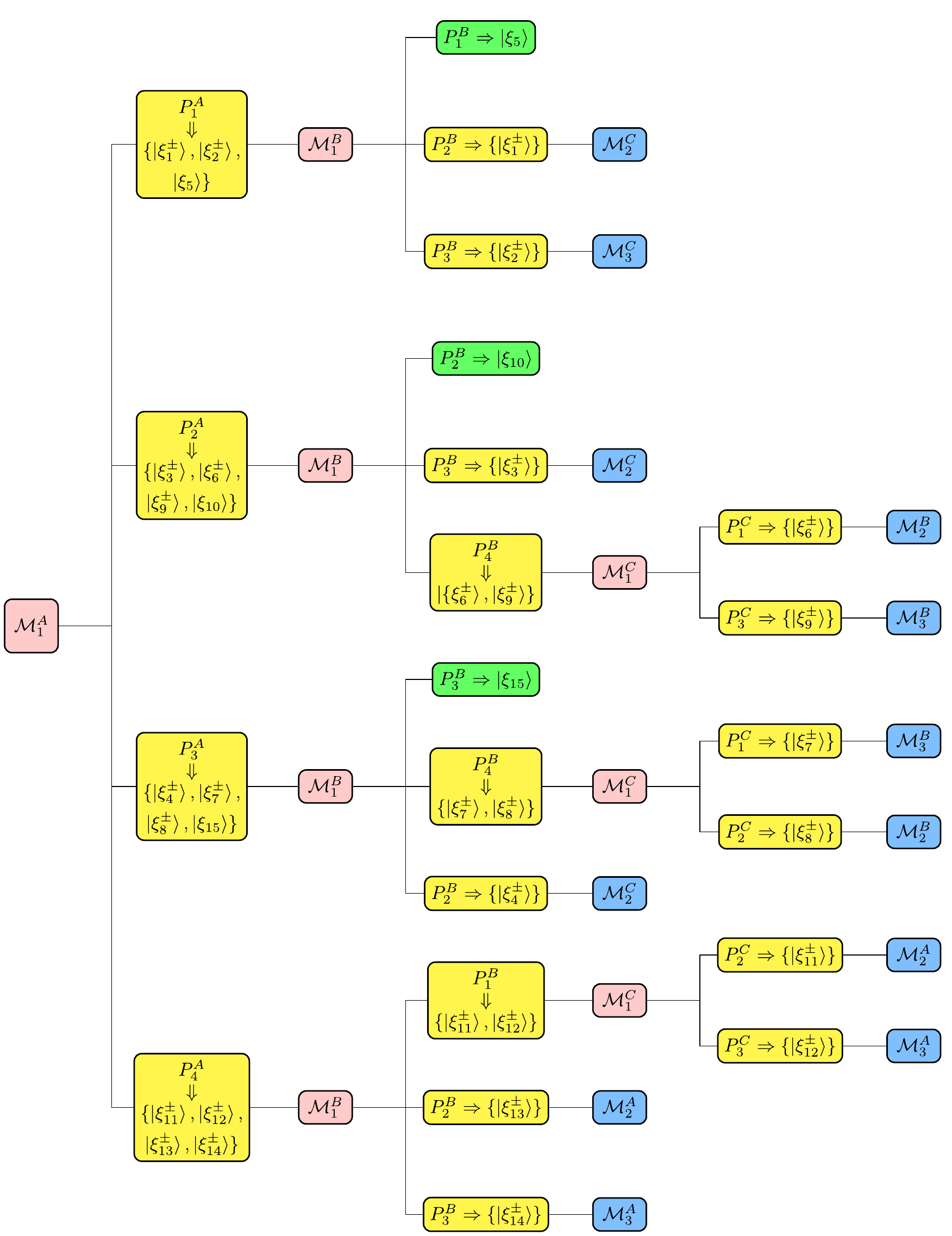}
	\caption{\textbf{Local discrimination protocol for the set $\mathcal{G}_3$.} The green boxes stand for the case where the state has been distinguished by clicking of a particular projector and the blue boxes refer to the ultimate measurement which has to be performed on a set of two states to perfectly distinguish them.}
	\label{fig1}
\end{figure}

\textbf{Case-V: \textit{Any one} player discards their qubits and any other discards their qutrit:}
This case is also a subset of Case-III. It is evident that discarding any qutrits from Alice's or Bob's system does not make the states $\rho_{{a_1a_2b_1b_2c_2}}^{1\pm}$ orthogonal.
{This completes our proof that $\mathcal{G}_3$ does not have local redundancy.}

We now move to the proof of local distinguishability of the set $\mathcal{G}_3$.
{We first define some important measurements which will be required to perform by the players at different steps of the protocol:}

\begin{eqnarray}\nonumber
\mathcal{M}_{1}&&\equiv\{P_1:=P[\ket{\mathbf{0-4}}],P_2:={P}[\ket{\mathbf{1-5}}], P_3:={P}[\ket{\mathbf{2-3}}],P_4:=\mathbb{I}-({P}[\ket{\mathbf{0-4}}]+{P}[\ket{\mathbf{1-5}}]+ {P}[\ket{\mathbf{2-3}}])\}\\\nonumber
\mathcal{M}_{2}&&\equiv\{Q_1:=P[\ket{\mathbf{0+1+4+5}}],Q_2:=P[\ket{\mathbf{0-1+4-5}}],Q_3:=\mathbb{I}-(P[\ket{\mathbf{0+1+4+5}}]+[\ket{\mathbf{0-1+4-5}}])\}\\\nonumber
\mathcal{M}_{3}&&\equiv\{R_1:=P[\ket{\mathbf{0+2+4+3}}],R_2:=P[\ket{\mathbf{0-2+4-3}}],R_3:=\mathbb{I}-(P[\ket{\mathbf{0+2+4+3}}]+{P}[\ket{\mathbf{0-2+4-3}}])\}.
\end{eqnarray}

{For rest of the proof, we use $\mathcal{M}_{k}^{i}$ to mean that the party-$i\in\{A,B,C\}$ has performed the measurement $k\in\{1,2,3\}$. In a similar fashion, $P_k^i$ implies that $P_k$ projector is clicked when measurement $\mathcal{M}_1^i$ is performed on $i$th party's subsystem. Let us now move to the distinguishability protocol.
First, Alice performs the local measurement $\mathcal{M}_{1}^A$. Consequently, depending upon the given state, any four of the projectors can click. Whatever outcome Alice gets, Bob will perform measurement $\mathcal{M}_1^A$ in the second step and so on. The detailed step by step analysis is pictorially presented in Fig.\ref{fig1}. Note that it is evident that any two pure orthogonal multipartite states can be perfectly locally distinguished always. We thus described our protocol up to the point when the players sole task boils down to locally distinguish two pure orthogonal states. It is also important to mention that the measurements $\mathcal{M}_2$ and $\mathcal{M}_3$ will be mainly required to distinguish two orthogonal pure states at the ultimate steps of the protocol.}
This completes the proof.

\end{proof}
\section*{Proof of Proposition \ref{propositiona}}
\begin{proof}
The proof that the set $\mathcal{G}_4$ is free from local redundancy is quite evident from the proof of Proposition \ref{proposition2}. Thus, here, we will describe only the local distinguishability protocol of $\mathcal{G}_4$.

\textbf{Step-1:}
Charlie performs a measurement: 

\begin{eqnarray}\nonumber
\mathcal{M}^C_{1}&&\equiv\{P_1^C:=P[\ket{\mathbf{0-4}}_C],P_2^C:={P}[\ket{\mathbf{1-5}}_C], P_3^C:={P}[\ket{\mathbf{2-3}}_C],P_4^C:=\mathbb{I}-({P}[\ket{\mathbf{0-4}}_C]+{P}[\ket{\mathbf{1-5}}_C]+ {P}[\ket{\mathbf{2-3}}_C])\}\nonumber
\end{eqnarray}
When $P_1^C$ clicks, the given state must be any of 
$\ket{\zeta_6^{\pm}},\ket{\zeta_7^{\pm}}$ or $\ket{\zeta_5}$. If $P_2^C$ clicks, we can say that the state is one of $\ket{\zeta_8^{\pm}},\ket{\zeta_{11}^{\pm}},\ket{\zeta_{14}^{\pm}}$ or $\ket{\zeta_{10}}$. For the third outcome: click of $P_3^C$ implies that the state can be one of $\ket{\zeta_9^{\pm}},\ket{\zeta_{12}^{\pm}},\ket{\zeta_{13}^{\pm}}$ or $\ket{\zeta_{15}}$. Lastly, if $P_4^C$ clicks, the state can be any of $\ket{\zeta_1^{\pm}},\ket{\zeta_{2}^{\pm}},\ket{\zeta_{3}^{\pm}},\ket{\zeta_{4}^{\pm}}$.  

\textbf{Step-2:} For different outcomes of $\mathcal{M}_1^C$, our protocol splits into different branches. For each outcome of $\mathcal{M}_1^C$, we will describe the next step of our protocol below.

When the outcome $P_1^C$ clicks on measuring $\mathcal{M}_1^C$, three parties are now to discriminate between $\ket{\zeta_6^{\pm}},\ket{\zeta_7^{\pm}}$ and $\ket{\zeta_5}$ states. Now, Alice will perform a measurement 
\begin{eqnarray}\nonumber
\mathcal{M}^A_{1}&&\equiv\{P_1^A:=P[\ket{0}_A],P_2^A:={P}[\ket{{1}}_A], P_3^A:={P}[\ket{{2}}_A]\}
\end{eqnarray}
In case when $P_1^A$ clicks, the players will be sure that the given state is $\ket{\zeta_5}$. Otherwise, when $P_2^A$ or $P_3^A$ clicks, in each case the players are to distinguish between only two states $\ket{\zeta_6^{\pm}}$ or $\ket{\zeta_7^{\pm}}$ respectively and this is always possible \cite{lid3,walgate}.

Consider now that after the measurement $\mathcal{M}_1^C$, $P_2^C$ clicks. Therefore, the task is now to distinguish among $\ket{\zeta_8^{\pm}},\ket{\zeta_{11}^{\pm}},\ket{\zeta_{14}^{\pm}}$ and $\ket{\zeta_{10}}$. At this point, Bob will perform a measurement, 
\begin{eqnarray}\nonumber
\mathcal{M}^B_{1}&&\equiv\{P_1^B:=P[(\ket{0},\ket{1})_B],P_2^B:={P}[\ket{{2}}_B]\}
\end{eqnarray}
Now, if $P_2^B$ clicks the given state must be either of $\ket{\zeta_{14}^{\pm}}$ which can be easy distinguished by a measurement on Alice's side \cite{lid3,walgate}. 
When $P_1^B$ clicks the given state can be any of  $\ket{\zeta_8^{\pm}},\ket{\zeta_{11}^{\pm}}$ and $\ket{\zeta_{10}}$. 
At this point, Alice will perform a measurement:
\begin{eqnarray}\nonumber
\mathcal{M}^A_2&&\equiv\{Q_1^A:=P[(\ket{0},\ket{1})_A],Q_2^A:={P}[\ket{{2}}_A]\}
\end{eqnarray}
If $Q_2^A$ clicks the given state must be either of $\ket{\zeta_8^{\pm}}$ which can again be distinguished perfectly \cite{lid3,walgate}.
When $Q_1^A$ clicks, the players are to distinguish among $\ket{\zeta_{11}^{\pm}}$ and $\ket{\zeta_{10}}$. Now, Bob will perform a measurement: $\mathcal{M}^B_2\equiv\{Q_1^B:=P[\ket{0}_B],Q_2^B:={P}[\ket{{1}}_B,Q_3^B:={P}[\ket{{2}}_B]$. If $Q_2^B$ clicks, then the given state is certainly $\ket{\zeta_{10}}$. Otherwise, if $Q_1^B$ clicks the state is either of $\ket{\zeta_{11}^{\pm}}$ which can be perfectly locally discriminated \cite{lid3,walgate}. 

Now, consider that on measuring $\mathcal{M}_1^C$, $P_3^C$ clicks. Consequently, the players are now to distinguish between $\ket{\zeta_9^{\pm}},\ket{\zeta_{12}^{\pm}},\ket{\zeta_{13}^{\pm}}$ or $\ket{\zeta_{15}}$. Now, Bob will perform a measurement: $\mathcal{M}^B_3\equiv\{T_1^B:=P[(\ket{0},\ket{2})_B],T_2^B:={P}[\ket{{1}}_B]\}$. If $T_2^B$ clicks then the given state must be either of $\ket{\zeta_{13}^{\pm}}$ which can be perfectly locally distinguished \cite{lid3,walgate}. On the other hand, if $T_1^B$ clicks, the state is certainly any of $\ket{\zeta_9^{\pm}},\ket{\zeta_{12}^{\pm}},\ket{\zeta_{15}}$. Now, Alice performs a measurement: $\mathcal{M}^A_2\equiv\{R_1^A:=P[(\ket{0},\ket{2})_A],R_2^A:={P}[\ket{{1}}_A]\}$. If $R_2^A$ clicks, the given state is either of $\ket{\zeta_9^{\pm}}$ which can be perfectly distinguished via LOCC. Otherwise, for the click $R_1^A$ the state can be any of $\ket{\zeta_{12}^{\pm}},\ket{\zeta_{15}}$. At this point, Bob will again perform a measurement, $\mathcal{M}^B_4\equiv\{N_1^B:=P[\ket{0}_B],N_2^B:={P}[\ket{{1}}_B],N_3^B:={P}[\ket{{2}}_B]\}$. If $N_1^B$ clicks the state must be either of $\ket{\zeta_{12}^{\pm}}$ which can be locally distinguishable \cite{lid3,walgate}. Otherwise, if $N_3^B$ clicks the state must be $\ket{\zeta_{15}}$. 

Ultimately, consider the case when $P_4^C$ clicks when $\mathcal{M}_1^C$ is measured. Here, the players are to distinguish between $\ket{\zeta_1^{\pm}},\ket{\zeta_{2}^{\pm}},\ket{\zeta_{3}^{\pm}},\ket{\zeta_{4}^{\pm}}$. Now, Alice will perform a measurement: $\mathcal{M}^A_3\equiv\{T_1^A:=P[\ket{0}_A],T_2^A:={P}[\ket{{1}}_A],T_3^A:={P}[\ket{{2}}_A]\}$. If $T_2^A$ clicks, the given state must be either of $\ket{\zeta_{3}^{\pm}}$ which can be locally distinguished \cite{lid3,walgate}. If $T_3^A$ clicks, the given state is any of $\ket{\zeta_{4}^{\pm}}$ which can again be locally distinguished \cite{lid3,walgate}. Otherwise, if $T_1^A$ clicks then the given state can be any of $\ket{\zeta_1^{\pm}},\ket{\zeta_{2}^{\pm}}$. Bob will now perform a measurement: $\mathcal{M}^B_5\equiv\{L_1^B:=P[\ket{0}_B],L_2^B:={P}[\ket{{1}}_B],L_3^B:={P}[\ket{{2}}_B]\}$. If $L_2^B$ clicks the given state can be either of $\ket{\zeta_1^{\pm}}$ and if $L_3^B$ clicks the given state is any of $\ket{\zeta_2^{\pm}}$. Further, $\ket{\zeta_1^{\pm}}$ (or, $\ket{\zeta_2^{\pm}}$) can be perfectly locally distinguished \cite{lid3,walgate}. This concludes the local distinguishability protocol of $\mathcal{G}_4$.
\end{proof}

\end{document}